\DeclareMathOperator*{\arginf}{arg\,inf}
\title{\LARGE \bf
Minimum-norm Sparse Perturbations for Opacity in Linear Systems
}
\author{Varkey M. John and Vaibhav Katewa
\thanks{V. M. John is with the Robert Bosch Center for CPS (RBCCPS) at the Indian Institute of Science (IISc) Bangalore. V. Katewa is with the RBCCPS and Department
of ECE at IISc Bangalore. Email IDs:
        {\tt\small \{varkeym@iisc.ac.in,
vkatewa@iisc.ac.in\}}. This work is supported in part by SERB Grant SRG/2021/000292.} }
\newtheorem{theorem}{Theorem}
\newtheorem{lemma}{Lemma}
\theoremstyle{definition}
\newtheorem{definition}{Definition}
\newtheorem{example}{Example}
\theoremstyle{remark}
\newtheorem{remark}{Remark}
\theoremstyle{definition}
\newcommand\numberthis{\addtocounter{equation}{1}\tag{\theequation}}
\begin{document}
\maketitle
\thispagestyle{empty}
\pagestyle{empty}

\begin{abstract}
Opacity is a notion that describes an eavesdropper’s inability to estimate a system’s `secret' states by observing the system’s outputs. In this paper, we propose algorithms to compute the minimum sparse perturbation to be added to a system to make its initial states opaque. For these perturbations, we consider two sparsity constraints - structured and affine.
We develop an algorithm to compute the global minimum-norm perturbation for the structured case. For the affine case, we use the global minimum solution of the structured case as initial point to compute a local minimum. Empirically, this local minimum is very close to the global minimum.
We demonstrate our results via a running example.
\end{abstract}
\section{Introduction}
Privacy in Cyber-Physical Systems (CPS) has attracted significant interest in recent years, particularly due to increasing connectivity and computation capability in embedded devices. Recent works on privacy in CPS have explored ideas on opacity, differential privacy and information-theoretic security, among others \cite{DiffPriv_ACC_2016,InfPriv_ARC_2019,OpacityLinearSystems_IEEETAC_2020}.

Opacity, in particular, is a privacy notion which was introduced in the computer science literature, and has later been studied in depth for discrete event systems. In brief, opacity deals with the estimation of the `secret' states of a system from its outputs. In recent years, research has also been conducted to study this property's relevance in control systems \cite{OpacityLinearSystems_IEEETAC_2020,OpacityTrade-off_IEEECDC_2022}. These works have established connections between opacity and various properties of such systems. However, these works could only give information on whether the states of a system are opaque or not, and they do not provide information on how close a non-opaque system is to an opaque system. In this paper, we build on our previous work on the notion of opacity in linear systems \cite{OpacityTrade-off_IEEECDC_2022}, and develop algorithms that find the minimum-norm sparse perturbation required to make a system opaque.

In the literature, algorithms to compute a system's distance to uncontrollability, unobservability, instability, etc. have been studied for a long time (for example, see \cite{DistanceCont_IEEETAC_1986}). Recent works have also investigated the minimum perturbation required for the existence of invariant zeros. For instance, the authors in \cite{FastAlgo_IEEECDC_2008} developed a fast algorithm to compute the value of complex $s$ which gives the minimum perturbation for an invariant zero to exist in the system. Further, given a value of $s$, in \cite{Transmissionzerosradius_IFAC_2008}, the same authors developed closed form expressions to compute the associated minimum-norm real perturbation. However, none of these works have considered sparsity in the perturbations. 

Along the lines of structured perturbations, the authors in \cite{SubmatrixPert_IMANA_1988} computed the minimum perturbation required to be added to the lower-right sub-matrix $D$ such that the complete matrix loses rank. The authors in \cite{StrucDistPropP_TAC_2018} developed an algorithm to construct the minimum perturbation under sparsity constraints so that system will lose property $\mathcal{P}$. In a more recent work, the authors in \cite{RankRelaxation_arXiv_2022} proposed a rank-relaxation algorithm using truncated nuclear norm for sparse perturbations as applied to both operator 2-norm and Frobenius norm. In both \cite{StrucDistPropP_TAC_2018} and \cite{RankRelaxation_arXiv_2022}, the authors could compute only the local minimum and not the global minimum.

In our work, we develop algorithms to compute the global and local minimum-norm perturbations for the structured and affine sparsity constraints, respectively. Further, we improve the affine perturbation algorithm by using the solution to the structured perturbation case as the initial point.

\noindent {{\bf Notation:}} 
$\sigma_i(A,B)$ and $\sigma_i(A)$ denote the $i^\text{th}$ largest generalized singular value of pair $(A,B)$ and $i^\text{th}$ largest singular value of matrix $A$, respectively. $A^H$ denotes the conjugate-transpose of matrix $A$. diag$(d_1,d_2,\cdots,d_m)$ denotes an $m\times m$ diagonal matrix with diagonal elements ordered as $d_1,d_2,\cdots,d_m$. $A(i,j)$ denotes the element of matrix $A$ at $i^\text{th}$ row and $j^\text{th}$ column. $I$ denotes the identity matrix with appropriate dimension. $\mathcal{S}_1\backslash\mathcal{S}_2$ denotes the set difference operation. $\phi$ denotes the empty set. $j$ is the imaginary unit  $j=\sqrt{-1}$.

\section{Problem Formulation}
\subsection{Background}
We consider a discrete-time linear time-invariant system (denoted by $\Gamma$): 
{\small\begin{align*}
\Gamma\mathpunct{:}\quad\begin{aligned} x(k+1)&=Ax(k)+Bu(k),\\
y(k)&=Cx(k)+Du(k),\end{aligned}\numberthis\label{normal_model}
\end{align*}}

\noindent where $x\in \mathbb{R}^n, y\in\mathbb{R}^m, u\in \mathbb{R}^p, k\in \mathbb{Z}$ represent the state, output, input and time instant, respectively. Let 
{\small\begin{align*}
Y_{x(0),U(k)}=\begin{bmatrix}y(0)^T&y(1)^T&\cdots& y(k)^T\end{bmatrix}^T,
\end{align*}}

\noindent denote the output sequence (vector) produced by applying the input sequence 
{\small\begin{align*}
U(k)=\begin{bmatrix}u(0)^T&u(1)^T&\cdots& u(k)^T\end{bmatrix}^T,
\end{align*}}

\noindent to an initial state $x(0)\in\mathcal{X}_0$. We assume $\mathcal{X}_0=\mathbb{R}^n$ and $m\geq p$. If $m< p$, the analysis is performed on the system with transformed matrices $(A^T,C^T,B^T,D^T)$ (for which $m>p$).

We consider a set of secret initial states, denoted by $\mathcal{X}_s$ ($\mathcal{X}_s\subseteq \mathcal{X}_0)$, that a system operator wishes to keep private from eavesdroppers. The remaining set of non-secret initial states is $\mathcal{X}_{ns}= \mathcal{X}_0\backslash\mathcal{X}_s$. Any element of $\mathcal{X}_{ns}$ is not considered sensitive to disclosure. We use $x_s(0)$ and $x_{ns}(0)$ to denote individual elements in $\mathcal{X}_s$ and $\mathcal{X}_{ns}$, respectively.

\begin{definition}[\emph{Opacity of State \cite{OpacityTrade-off_IEEECDC_2022}}] \label{def:opacity_state} 
A secret initial state $x_s(0)\in \mathcal{X}_s$ is opaque with respect to the non-secret initial state set $\mathcal{X}_{ns},$ if, for all $k\geq 0$, the following property holds: for every $U_{s}(k)$, there exist $x_{ns}(0)\in \mathcal{X}_{ns}$ and $U_{ns}(k)$ such that $Y_{x_s(0),U_s(k)}=Y_{x_{ns}(0),U_{ns}(k)}$.\hfill $\square$
\end{definition}



\begin{definition}[\emph{Weakly Unobservable Subspace (WUS) \cite{ExtContObs_TAC_1976}} ]\label{def:WUS}
The weakly unobservable subspace of system \eqref{normal_model} (denoted by $\mathcal{V}(\Gamma)$) is defined as $\mathcal{V}(\Gamma) =\{x\in\mathbb{R}^n:\exists\: U(k)\text{ such that }Y_{x,U(k)}=0,\:\forall\:k\geq 0\}.$\hfill $\square$
\end{definition}


In our previous work \cite{OpacityTrade-off_IEEECDC_2022}, we showed that opaque states exist in a system if and only if the WUS is non-trivial (that is, $\mathcal{V}(\Gamma)\neq\{0\}$). We show next that this condition is also equivalent to the existence of invariant zeros in the system.
\begin{definition}[\emph{Invariant Zero}]\label{def:inv_zero}
    A number $s\in\mathbb{C}$ is an invariant zero of System $\Gamma$ in \eqref{normal_model} if:\\
    {\small\begin{align*}
    \text{rank}\Bigg(\underbrace{\begin{bmatrix}A-sI & B\\
     C & D\end{bmatrix}}_{\triangleq \Lambda_s}\Bigg)<n+p.\numberthis \label{eq:inv_zero}
    \end{align*}}\hfill $\square$
\end{definition}

\begin{lemma}\label{lem:ISOiffinv_zero}
    For System $\Gamma$ in \eqref{normal_model}, there exist opaque secret states  if and only if there exists an invariant zero in $\Gamma$.
\end{lemma}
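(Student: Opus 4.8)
The strategy is to chain two equivalences: (i) opaque secret states exist $\iff \mathcal{V}(\Gamma) \neq \{0\}$, which is already established in the authors' prior work \cite{OpacityTrade-off_IEEECDC_2022}; and (ii) $\mathcal{V}(\Gamma) \neq \{0\} \iff \Gamma$ has an invariant zero. So the real content is proving equivalence (ii), linking the weakly unobservable subspace to the rank deficiency of the Rosenbrock system matrix pencil $\Lambda_s$.

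For the forward direction of (ii), suppose $0 \neq x \in \mathcal{V}(\Gamma)$, so there is an input sequence $U(k)$ with $Y_{x,U(k)} = 0$ for all $k \geq 0$. The plan is to look at the geometric characterization of the WUS: $\mathcal{V}(\Gamma)$ is the largest subspace $\mathcal{V}$ for which there exists a friend matrix $F$ such that $(A+BF)\mathcal{V} \subseteq \mathcal{V}$ and $(C+DF)\mathcal{V} = \{0\}$ (the standard geometric control fact, e.g. from \cite{ExtContObs_TAC_1976}). Restricting $A+BF$ to the nontrivial invariant subspace $\mathcal{V}$, it has an eigenvalue $s \in \mathbb{C}$ with eigenvector $v \in \mathcal{V}$ (possibly complex; one works over $\mathbb{C}^n$). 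Then $(A+BF)v = sv$ and $(C+DF)v = 0$, so setting $w = Fv$ we get $Av + Bw = sv$ and $Cv + Dw = 0$, i.e. $\Lambda_s \begin{bmatrix} v \\ w \end{bmatrix} = 0$ with $\begin{bmatrix} v \\ w \end{bmatrix} \neq 0$, giving rank $\Lambda_s < n + p$. For the converse, if $\Lambda_s$ drops rank at some $s$, take a nonzero null vector $\begin{bmatrix} v \\ w \end{bmatrix}$; then $v \neq 0$ (else $Bw = 0$ and $Dw = 0$, and since $m \geq p$ with the system matrix having full column normal rank generically... — this needs care, see below) and the input $u(k) = s^k w$ applied to $x(0) = v$ yields $x(k) = s^k v$ and $y(k) = s^k(Cv + Dw) = 0$, so $v \in \mathcal{V}(\Gamma)$.

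The main obstacle I anticipate is the degenerate case in the converse: a null vector of $\Lambda_s$ of the form $\begin{bmatrix} 0 \\ w \end{bmatrix}$ with $w \neq 0$, which would mean $\begin{bmatrix} B \\ D \end{bmatrix}$ itself is column-rank-deficient — this corresponds to $\Lambda_s$ losing rank at every $s$. I would handle this by either invoking the standing assumption structure ($m \geq p$ and presumably $\begin{bmatrix} B \\ D \end{bmatrix}$ full column rank, or reducing to that case), or by noting that if $\begin{bmatrix} B^T & D^T \end{bmatrix}^T$ loses column rank one can eliminate redundant input directions without changing the output map, reducing to the nondegenerate case; then the argument above goes through. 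A secondary subtlety is the complex-vs-real distinction: invariant zeros may be complex, and the WUS is a real subspace, so in the forward construction from a complex eigenpair $(s, v)$ one must take real and imaginary parts of the trajectory $s^k v$ to exhibit a nonzero real element of $\mathcal{V}(\Gamma)$ — routine but worth stating. I would present the proof as: recall equivalence (i), then prove (ii) in two directions using the null-vector/$s^k$-trajectory correspondence, dispatching the degeneracy and the complexification remarks inline.
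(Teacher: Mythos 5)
Your skeleton is exactly the paper's: chain the equivalence through the weakly unobservable subspace, i.e.\ opaque states exist iff $\mathcal{V}(\Gamma)\neq\{0\}$ (from \cite{OpacityTrade-off_IEEECDC_2022}), and then $\mathcal{V}(\Gamma)\neq\{0\}$ iff an invariant zero exists. The difference is that the paper disposes of the second equivalence with a single citation to \cite{Book_2002_ControlTheory}, whereas you prove it from scratch via the friend-matrix characterization of the WUS (an eigenpair $(s,v)$ of $(A+BF)$ restricted to $\mathcal{V}(\Gamma)$ yields the null vector $[v^T\ (Fv)^T]^T$ of $\Lambda_s$) and the $u(k)=s^k w$ trajectory construction for the converse; both constructions are standard and correct, and your complexification remark is routine to discharge. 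What your approach buys is that it surfaces a real caveat the paper's citation glosses over: the degenerate case you flag, a null vector of the form $[0^T\ w^T]^T$ with $\left[\begin{smallmatrix}B\\ D\end{smallmatrix}\right]$ column-rank-deficient, genuinely breaks the converse under Definition~\ref{def:inv_zero} as written, because that definition tests $\mathrm{rank}(\Lambda_s)$ against $n+p$ rather than against the normal rank of the pencil. For instance, $B=0$, $D=0$, $C=I$ gives $\mathrm{rank}(\Lambda_s)<n+p$ for every $s$ while $\mathcal{V}(\Gamma)=\{0\}$; the textbook statement avoids this by defining zeros relative to normal rank, so your proposed repair (assume, or reduce to, $\left[\begin{smallmatrix}B\\ D\end{smallmatrix}\right]$ of full column rank) is exactly the right regularity condition to add. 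In short: same route, but you supply a proof and an explicit nondegeneracy hypothesis where the paper supplies two citations.
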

\begin{proof}
    For System $\Gamma$ in \eqref{normal_model}, there exist opaque secret states if and only if $\mathcal{V}(\Gamma)\neq \{0\}$ \cite{OpacityTrade-off_IEEECDC_2022}. Further, $\mathcal{V}(\Gamma)\neq \{0\}$ if and only if there exists an invariant zero in $\Gamma$ \cite{Book_2002_ControlTheory}.
\end{proof}

\subsection{Problem Formulation}
There may exist some systems $(A,B,C,D)$ for which no opaque sets exist (that is, no invariant zeros exist). For such systems, a system operator wishes to change/perturb the system matrices $(A,B,C,D)$ such that the new system has opaque sets. Moreover, it wishes to to keep the perturbations ``small" for operational reasons. In this paper, we aim to find minimal system perturbations that make a non-opaque system opaque. We consider two problems with different types of perturbations - structured perturbations and arbitrary sparsity perturbations. We define these two problems in this subsection, and solve them in the subsequent sections.

\subsubsection{Problem 1}
In this problem, we consider perturbations with structured sparsity constraints. The perturbed system $\Gamma^P$  with matrices $(A_P,B_P,C_P,D_P)$ is given by 

{\small\begin{align*}
    \begin{bmatrix}A_P&B_P\\C_P&D_P\end{bmatrix}=\begin{bmatrix}A&B\\C&D\end{bmatrix}-E\Delta G,\numberthis\label{eq:pertform}
\end{align*}}

\noindent where $E$ and $G$ are diagonal matrices containing binary entries $\{0,1\}$. By selecting the appropriate  diagonal entries of $E$ and $G$, one can make corresponding rows and columns of $E\Delta G$ zero, and we call it as structured perturbation.



 With this perturbation structure, we consider the following optimization problem:
{\small\begin{equation}\label{eq:pertopt_prob1.1}
\begin{aligned}
\arginf\limits_{\Delta\in\mathbb{R}^{(n+m)\times (n+p)}}\quad \|\Delta\|&\\
    s.t.  \quad  \mathcal{V}(\Gamma^P)&\neq \{0\},
\end{aligned}
\end{equation}}

\noindent where $\|\cdot\|$ is the spectral norm. Let  $\Delta^*$ be the solution to the optimization problem \eqref{eq:pertopt_prob1.1}. By Lemma \ref{lem:ISOiffinv_zero} and Definition \ref{def:inv_zero}, we have that optimization problem \eqref{eq:pertopt_prob1.1} is equivalent to

{\small\begin{equation}\label{eq:pertopt_prob1.3}
\begin{aligned}
    \arginf\limits_{s\in\mathbb{C},\: \Delta\in\mathbb{R}^{(n+m)\times (n+p)}}&\quad \|\Delta\|\\\
    s.t. \quad  & \text{rank}(\Lambda_s -E\Delta G)<n+p.
\end{aligned}
\end{equation}}
\subsubsection{Problem 2}
In this problem, we consider perturbations with arbitrary sparsity constraints, that is, any arbitrary set of elements of the system matrices may be perturbed. Such perturbation corresponds to the class of affine perturbations in the literature. Note that structured perturbations considered in Problem 1 cannot model such sparsity constraints. For instance, for the case where only three elements $A(1,1)$, $A(1,3)$ and $A(3,1)$ can be perturbed, one cannot construct a corresponding $E$ and $G$. Hence, this is a more general class of perturbations which contains structured perturbations.
In this model, we have the following optimization problem:
{\small\begin{equation}\label{eq:pertopt_prob2.1}
\begin{aligned}
&\arginf\limits_{s\in\mathbb{C},\Delta\in\mathbb{R}^{(n+m)\times (n+p)}}&&\!\!\!\!\! \left\|\sum\limits_{i=1}^lE_i\Delta G_i\right\|\\\
    &\qquad\:s.t. &&  \!\!\!\!\!\text{rank}\left(\Lambda_s -\!\sum\limits_{i=1}^lE_i\Delta G_i\right)\!<\!n\!+\!p,
\end{aligned}
\end{equation}}

\noindent  where $E_i\in\mathbb{R}^{(n+m)\times (n+m)}$, $G_i\in\mathbb{R}^{(n+p)\times (n+p)}$ and $l$ is the number of $E_i\Delta G_i$ needed for the specific sparsity pattern.

In the following, we will show that the globally optimal solution of Problem 1 can be obtained, while only locally optimal solutions to Problem 2 can be obtained. Thus, there is need to solve these problems separately.

\section{Problem 1: Structured Perturbation}\label{sec:4.1}
We solve problem \eqref{eq:pertopt_prob1.3} in two steps. First, we fix $s$ and compute closed-form expressions for the optimal perturbation in problem \eqref{eq:pertopt_prob1.3} and its norm. We denote this optimal perturbation as $\Delta_s^*$. This is computed by extending the result in \cite{SubmatrixPert_IMANA_1988} (for real matrices) to complex matrices. Second, we find the optimal value of $s$ which gives the minimum $\|\Delta_s^*\|$ by searching through all $s\in\mathbb{C}$. This is done using the algorithms in \cite{FastAlgo_IEEECDC_2008}. The $\Delta_s^*$ corresponding to this optimal $s$ is then the required $\Delta^*$ that solves \eqref{eq:pertopt_prob1.3}.


Next, we describe the details of the two sub-problems corresponding to the above two steps.

\subsection{Sub-problem 1: Minimum-norm perturbation $\Delta_s^*$ for a given $s\in\mathbb{C}$}
Problem \eqref{eq:pertopt_prob1.3} with a given $s\in\mathbb{C}$ can be reformulated as: 

{\small\begin{equation}\label{eq:pertopt_prob1.6}
\begin{aligned}
&\arginf\limits_{x\in \mathbb{C}^{n+p}\backslash \{0\},\Delta\in\mathbb{R}^{(n+m)\times (n+p)}}&&\|\Delta\|\\\
    &\qquad\:s.t. &&  \left(\Lambda_s-E\Delta G\right)x=0,
\end{aligned}
\end{equation}}


The $\Delta$ that solves \eqref{eq:pertopt_prob1.6} is denotes as $\Delta_s^*$. Next, we aim to decompose constraint \eqref{eq:pertopt_prob1.6}. Let $\Delta^\alpha$ denote the sub-matrix formed by removing the all-zero rows of $E\Delta G$. Further, let $\Delta^r$ ($r$ denotes ``reduced") denote the sub-matrix formed by removing the all-zero rows and all-zero columns of $E\Delta G$. Note that there exists a matrix $J$ with binary entries $\{0,1\}$ such that $\Delta^\alpha = \Delta^r J$. Let $r_z$ and $c_z$ be the number of zero rows and zero columns, respectively, of $E\Delta G$. Further, based on the indices of the elements of $\Lambda_s$ that are to be perturbed, we partition $\Lambda_s$ row-wise into two separate matrices. The first matrix is denoted by $\Lambda_s^\alpha$, which is formed by those rows of $\Lambda_s$ whose elements can be perturbed. The second matrix is denoted by $\Lambda_s^\beta$, which is formed by the remaining rows of $\Lambda_s$. With these partitions of $\Lambda_s$ and $E\Delta G$, it is easy to see that constraint \eqref{eq:pertopt_prob1.6} can be decomposed row-wise into two constraints $(\Lambda_s^\alpha -\Delta^r J)x =0$ and $ \Lambda_s^\beta x =0$.

The above decomposition shows that \eqref{eq:pertopt_prob1.6} depends only on the reduced $\Delta^r$ and not on the full $\Delta$. Further, we have
{\small\begin{align*}
\|\Delta^r\| \overset{(a)}{=} \|E\Delta G\|\leq\|E\|\|\Delta\|\|G\|\overset{(b)}{=}\|\Delta\|,
\end{align*}}

\noindent where $(b)$ holds since $\|E\|=\|G\|=1$ because they are binary diagonal matrices, and we refer to Appendix \ref{app:normeq} for the proof of $(a)$. This implies that the cost $\|\Delta\|$ in problem \eqref{eq:pertopt_prob1.6} can be replaced with $\|\Delta^r\| $. Thus, \eqref{eq:pertopt_prob1.6} is reformulated as
{\small\begin{equation}\label{eq:pertopt_prob1.7}
\begin{aligned}
&\arginf\limits_{x\neq 0,\Delta^r\in\mathbb{R}^{(n+m-r_z)\times (n+p-c_z)}}&&\|\Delta^r\|\\\
    &\qquad\:s.t.  &&  &\left(\Lambda_s^\alpha -\Delta^r J\right)x =0,\\
    &&&&\Lambda_s^\beta x =0.
\end{aligned}
\end{equation}}

We describe this procedure in the running example below.
\begin{example}\label{ex:2}
Consider the discrete-time version of Example 1 of \cite{Transmissionzerosradius_IFAC_2008}. 
For this system, $\mathcal{V}(\Gamma)=\{0\}$. Let $A_o,B_o,C_o,D_o$ be the system matrices of the original system in \cite{Transmissionzerosradius_IFAC_2008}. Since for the above system, $m<p$, we consider the modified system with matrices $A=A_o^T, B=C_o^T, C=B_o^T, D=D_o^T$:
{\small\begin{equation}\label{sys:ex2_modfd}
\begin{aligned}
x(k+1)&= \begin{bmatrix}
0.74 & -0.12 & -0.38 \\
-0.69 & 1.62 & -0.21 \\
-2.08 & 0.63 & 0.14
\end{bmatrix}x(k)+\begin{bmatrix}
1.06 \\
0.71 \\
0.61
\end{bmatrix}u(k),\\
y(k)&=\begin{bmatrix}
-1.23 & 1.02 & -0.66 \\
-0.26 & 2.51 & 1.13
\end{bmatrix}x(k)+ \begin{bmatrix}
1.33 \\
-2.89
\end{bmatrix}u(k).
\end{aligned}
\end{equation}}

For this modified system, we have $m>p$. Consider the case where we perturb only elements  $A_0(1,1),A_0(1,3),A_0(3,1)$ and $A_0(3,3)$. This is same as perturbing the elements in same indices in $A$. Hence, we have the sparsity constraint matrices as $E=\text{diag}(1,0,1,0,0), G=\text{diag}(1,0,1,0)$.
For $s=0.8297+0.5583j$, we have
{\scriptsize\begin{align*}
 \Lambda_s &= \begin{bmatrix}0.0897-0.5583j &\!\!\! -0.12 &\!\!\! -0.38 &\!\!\! 1.06\\
 -0.69 &\!\!\! 0.7903-0.5583i &\!\!\! -0.21 &\!\!\! 0.71\\
 -2.08 &\!\!\! 0.63 &\!\!\! -0.6897-0.5583j &\!\!\! 0.61\\
 -1.23 &\!\!\! 1.02 &\!\!\! -0.66 &\!\!\! 1.33 \\
 -0.26 &\!\!\! 2.51 &\!\!\! 1.13 &\!\!\!\!\!\!\!\! -2.89\end{bmatrix}.
\end{align*}}
Since we perturb only rows 1 and 3, we have
{\small\begin{align*}
\Lambda_s^\alpha &= \begin{bmatrix}0.0897-0.5583j &\!\!\!\! -0.12 &\!\! -0.38 &\!\! 1.06\\
-2.08 & 0.63 &\!\! -0.6897-0.5583j &\!\! 0.61\end{bmatrix}\!,\\
\Lambda_s^\beta &= \begin{bmatrix}-0.69 & 0.7903-0.5583i & -0.21 & 0.71\\
-1.23 & 1.02 & -0.66 & 1.33 \\ -0.26 & 2.51 & 1.13 & -2.89\end{bmatrix}.
\end{align*}}

The perturbations $\Delta$, $\Delta^\alpha$ and $\Delta^r$ have the structure:
{\small\begin{align*}
\Delta &= \begin{bmatrix}\star & 0 & \star & 0\\
0 & 0 & 0 & 0\\
\star & 0 & \star & 0\\
0 & 0 & 0 & 0\\
0 & 0 & 0 & 0
\end{bmatrix},\quad
\Delta^\alpha=\begin{bmatrix}\star & 0 & \star & 0\\
\star & 0 & \star & 0
\end{bmatrix}, \Delta^r = \begin{bmatrix}\star & \star\\
\star & \star
\end{bmatrix},
\end{align*}}

\noindent where $\star\in\mathbb{R}$. Further, $J$ is given by:
{\small\begin{flalign*}
    &&J=\begin{bmatrix}
        1 & 0 & 0 & 0\\
        0 & 0 & 1 & 0
    \end{bmatrix}.&&\square
\end{flalign*}}
\end{example}

Next, we simplify Problem \eqref{eq:pertopt_prob1.7} further. 
We find the QR decomposition of $\Lambda_s^{\beta^H}$ as
    {\small\begin{align*}
        \Lambda_s^{\beta^H}=\begin{bmatrix}Q_1 & Q_2\end{bmatrix}\begin{bmatrix}R_1 \\ 0\end{bmatrix}.
    \end{align*}}

Since $\begin{bmatrix}Q_1 & Q_2\end{bmatrix}$ is an orthogonal matrix, we have $Q_1^HQ_2=0$. The columns of $Q_2$ form a basis for the nullspace of $\Lambda_s^\beta$. Thus, the solution to the second constraint of \eqref{eq:pertopt_prob1.7} is $x=Q_2 d$ (for some vector $d$), such that, $\Lambda_s^\beta x=R_1^HQ_1^HQ_2d=0$. Hence, we have that \eqref{eq:pertopt_prob1.7} reduces to
{\small\begin{equation}\label{eq:pertopt_prob1.8}
\begin{aligned}
&\arginf\limits_{d,\Delta^r\in\mathbb{R}^{(n+m-r_z)\times (n+p-c_z)}}&& \|\Delta^r\|\\\
    &\qquad\:s.t. &&  \Delta^r JQ_2d=\Lambda_s^\alpha Q_2d,
\end{aligned}
\end{equation}}

Let the solution to \eqref{eq:pertopt_prob1.8} be denoted by $\Delta_s^{*^r}$ and let $JQ_2$ be denoted by $Q_2^\alpha$. 
Note that the solution $\Delta_s^{*^r}$ 
is not guaranteed to exist in general. The following lemma provides the necessary and sufficient condition for its existence.
\begin{lemma}\label{lem:condtns_complexpertgivens}
There exists a solution to \eqref{eq:pertopt_prob1.8} if and only if $Q_2^\alpha\neq 0$ and rank$(\Lambda_s^\beta)<n+p$.
\begin{proof}
The proof follows from \cite{SubmatrixPert_IMANA_1988} and hence, is omitted.
\end{proof}
\end{lemma}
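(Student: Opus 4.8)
The plan is to recognize \eqref{eq:pertopt_prob1.8} as an instance of the sub-block rank-reduction problem of \cite{SubmatrixPert_IMANA_1988} and to transport its solvability criterion to the complex matrix $\Lambda_s$. First I would unwind the reductions: a feasible pair $(d,\Delta^r)$ for \eqref{eq:pertopt_prob1.8} is, through $x=Q_2d$ (with $d\neq0$ inherited from $x\neq0$ in \eqref{eq:pertopt_prob1.6}, since $Q_2$ has orthonormal columns), precisely a nonzero $x$ with $\begin{bmatrix}\Lambda_s^\alpha-\Delta^r J\\ \Lambda_s^\beta\end{bmatrix}x=0$, i.e.\ this $(n+m)\times(n+p)$ matrix has rank $<n+p$. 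Permuting the $n+p-c_z$ perturbable columns to the right and the perturbable rows $\Lambda_s^\alpha$ to the bottom turns this into: a perturbation of the trailing $(n+m-r_z)\times(n+p-c_z)$ sub-block alone makes $\Lambda_s$ lose column rank — exactly the setup of \cite{SubmatrixPert_IMANA_1988}. These permutations are unitary, so they affect neither feasibility nor $\|\Delta^r\|$.

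For necessity, both conditions drop out directly. A feasible point needs $x=Q_2d\neq0$, hence $\dim\mathrm{null}(\Lambda_s^\beta)\geq1$, i.e.\ $\mathrm{rank}(\Lambda_s^\beta)<n+p$. If moreover $Q_2^\alpha=JQ_2=0$, then $Jx=JQ_2d=0$ for every feasible $x$, so $(\Lambda_s^\alpha-\Delta^r J)x=\Lambda_s^\alpha x$ and the constraint forces $\Lambda_s^\alpha x=0$; together with $\Lambda_s^\beta x=0$ this gives $\Lambda_s x=0$ for a nonzero $x$, contradicting $\mathrm{rank}(\Lambda_s)=n+p$ (recall $\Gamma$ is non-opaque, hence has no invariant zero by Lemma \ref{lem:ISOiffinv_zero}, so $\mathrm{rank}(\Lambda_s)=n+p$ by Definition \ref{def:inv_zero}). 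Hence $Q_2^\alpha\neq0$ is necessary.

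For sufficiency I would, assuming $Q_2^\alpha\neq0$ and $\mathrm{rank}(\Lambda_s^\beta)<n+p$, construct a feasible point: pick $d\neq0$ with $v:=Q_2^\alpha d\neq0$, put $w:=\Lambda_s^\alpha Q_2 d$, and seek a \emph{real} $\Delta^r$ with $\Delta^r v=w$. When $\mathrm{Re}\,v$ and $\mathrm{Im}\,v$ are $\mathbb{R}$-independent, the real system $\Delta^r[\,\mathrm{Re}\,v\ \ \mathrm{Im}\,v\,]=[\,\mathrm{Re}\,w\ \ \mathrm{Im}\,w\,]$ is consistent and admits a closed-form solution; the hard part is the remaining case, where the image of $d\mapsto(Q_2^\alpha d,\Lambda_s^\alpha Q_2 d)$ must be used more carefully to reconcile the complex data of $\Lambda_s$ with a real $\Delta^r$, and this is precisely what the generalized-SVD construction of \cite{SubmatrixPert_IMANA_1988} delivers — simultaneously producing the minimum-norm $\Delta_s^{*^r}$ and $\|\Delta_s^{*^r}\|$ needed in the sequel. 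I therefore expect this real-versus-complex reconciliation to be the main obstacle, and it is the reason the detailed argument is deferred to the cited reference.
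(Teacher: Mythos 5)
Your proposal follows the same route as the paper, whose proof is simply a deferral to \cite{SubmatrixPert_IMANA_1988}: you correctly identify \eqref{eq:pertopt_prob1.8} as an instance of Watson's submatrix rank-reduction problem and transport its solvability criterion. The extra detail you supply is sound --- the necessity of rank$(\Lambda_s^\beta)<n+p$ and of $Q_2^\alpha\neq 0$ (the latter using that $\Lambda_s$ has full column rank for the non-opaque system) is argued correctly, and the real-versus-complex consistency issue you flag in the sufficiency direction is precisely the part that the cited reference (and, in this paper, the generalized-singular-value formula of Lemma \ref{lem:Min_struct_norm} together with the condition \eqref{eq:pertopt_prob1.10} on $d$) resolves.
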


Next, we proceed to solve \eqref{eq:pertopt_prob1.8}. In order to compute the minimum-norm perturbation $\Delta_s^{*^r}$, we first compute $\|\Delta_s^{*^r}\|$. For this, we have the following lemma. For this lemma and subsequent results, we use the following notation:
{\small\begin{align*}
\Pi(\gamma,M)=\begin{bmatrix}Re(M)&-\gamma Im(M)\\\gamma^{-1} Im(M)&Re(M)\end{bmatrix}.
\end{align*}}

\begin{lemma}\label{lem:Min_struct_norm}
    Let the conditions in Lemma \ref{lem:condtns_complexpertgivens} hold. Further, let $\delta$ denote the min\{number of rows, number of columns\} of the matrix $\Lambda_s^\alpha$. We have for problem \eqref{eq:pertopt_prob1.8}:
{\small\begin{align*}
\|\Delta_s^{*^r}\| &=\sup\limits_{\gamma\in(0,1]}\sigma_{2\delta-1}\big(\Pi(\gamma,\Lambda_s^\alpha Q_2),\Pi(\gamma,Q_2^\alpha)\big).\numberthis\label{eq:pertopt_prob1.9}
\end{align*}}

\end{lemma}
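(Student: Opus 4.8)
The plan is to recast \eqref{eq:pertopt_prob1.8} as a rank-deficiency problem for a complex matrix, convert it to an equivalent family of \emph{real} rank problems indexed by $\gamma$ through the operator $\Pi(\gamma,\cdot)$, and then identify the resulting quantity with a (scaled) minimum-norm rank-reducing perturbation, for which the generalized-singular-value formula is the classical one underlying Lemma~\ref{lem:condtns_complexpertgivens}.

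First I would reformulate \eqref{eq:pertopt_prob1.8}: a real $\Delta^r$ is feasible exactly when the complex matrix $\Lambda_s^\alpha Q_2-\Delta^r Q_2^\alpha$ admits a nonzero right null vector, i.e.\ loses column rank; thus $\|\Delta_s^{*^r}\|$ is the infimum of $\|\Delta^r\|$ over all such $\Delta^r$, and Lemma~\ref{lem:condtns_complexpertgivens} guarantees this set is nonempty. The mismatch — $\Delta^r$ real, the data complex — is handled by the real embedding. A direct computation on $Mv=\mathrm{Re}(Mv)+j\,\mathrm{Im}(Mv)$ shows that $\Pi(\gamma,M)$ sends the pair $(v_R,\gamma^{-1}v_I)$ to $(\mathrm{Re}(Mv),\gamma^{-1}\mathrm{Im}(Mv))$; since $v\mapsto(v_R,\gamma^{-1}v_I)$ is an $\mathbb{R}$-linear bijection, $M$ loses column rank if and only if $\Pi(\gamma,M)$ does, for every $\gamma>0$ — equivalently $\mathrm{rank}\,\Pi(\gamma,M)=2\,\mathrm{rank}_{\mathbb{C}}M$, because $\Pi(\gamma,M)$ is the canonical $\mathbb{R}$-embedding of $M$ pre- and post-multiplied by invertible diagonal scalings. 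Since $\Delta^r$ is real one has $\Pi(\gamma,\Delta^r Q_2^\alpha)=(I_2\otimes\Delta^r)\Pi(\gamma,Q_2^\alpha)$ and $\|I_2\otimes\Delta^r\|=\|\Delta^r\|$. Hence $\Delta^r$ is feasible for \eqref{eq:pertopt_prob1.8} if and only if, for every $\gamma\in(0,1]$, the real matrix $\Pi(\gamma,\Lambda_s^\alpha Q_2)-(I_2\otimes\Delta^r)\Pi(\gamma,Q_2^\alpha)$ loses column rank (and, being an embedding, its rank then drops by two).

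For the lower bound I would relax the Kronecker structure: for each $\gamma$, $I_2\otimes\Delta^r$ is an admissible \emph{unstructured} real perturbation achieving this two-unit rank drop, so $\|\Delta^r\|=\|I_2\otimes\Delta^r\|$ is at least the minimum spectral norm over unstructured real $\widehat\Delta$ driving $\Pi(\gamma,\Lambda_s^\alpha Q_2)-\widehat\Delta\,\Pi(\gamma,Q_2^\alpha)$ to that deficient rank. By the real specialisation of the minimum-norm submatrix-perturbation result behind Lemma~\ref{lem:condtns_complexpertgivens} (cf.\ \cite{SubmatrixPert_IMANA_1988}), this minimum equals $\sigma_{2\delta-1}\big(\Pi(\gamma,\Lambda_s^\alpha Q_2),\Pi(\gamma,Q_2^\alpha)\big)$, the index $2\delta-1$ encoding precisely the two-unit drop that a complex rank deficiency becomes in the real picture. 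Taking the infimum over feasible $\Delta^r$ and then the supremum over $\gamma\in(0,1]$ yields $\|\Delta_s^{*^r}\|\ge\sup_{\gamma\in(0,1]}\sigma_{2\delta-1}\big(\Pi(\gamma,\Lambda_s^\alpha Q_2),\Pi(\gamma,Q_2^\alpha)\big)$.

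The reverse inequality is the crux, and the main obstacle. Here I would first show the supremum over $\gamma$ is attained at some $\gamma^\star\in(0,1]$ — using continuity of the generalized singular value on $(0,1]$, the equivalence between $\Pi(1/\gamma,M)$ and $\Pi(\gamma,\overline{M})$ that justifies restricting the search to $(0,1]$, and an analysis of the limit $\gamma\to0^{+}$ — and then argue that at $\gamma^\star$ a norm-optimal unstructured perturbation realising $\sigma_{2\delta-1}$ can be chosen of block form $I_2\otimes\Delta^r$: the scalar $\gamma$ is exactly the degree of freedom that aligns the real and imaginary parts of the associated (generalized) singular vectors so that the optimal rank-reducing perturbation commutes with multiplication by $j$, hence respects the real structure. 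The resulting $\Delta^r$ is feasible for \eqref{eq:pertopt_prob1.8} with $\|\Delta^r\|=\sup_{\gamma\in(0,1]}\sigma_{2\delta-1}\big(\Pi(\gamma,\Lambda_s^\alpha Q_2),\Pi(\gamma,Q_2^\alpha)\big)$, closing the argument. I expect this tightness step — the exact-recovery property of the $\gamma$-scaled bound, analogous to exactness of a $D$-scaled structured singular value — together with the attainment and boundary behaviour of the supremum, to carry essentially all the technical weight; the reduction and the $\Pi(\gamma,\cdot)$ identities in the earlier steps are routine.
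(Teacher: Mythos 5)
Your opening reduction---feasibility of $\Delta^r$ in \eqref{eq:pertopt_prob1.8} is the existence of a nonzero $d$ with $(\Lambda_s^\alpha Q_2-\Delta^r Q_2^\alpha)d=0$, i.e.\ a rank deficiency of $M-\Delta^r N$ with $M=\Lambda_s^\alpha Q_2$ and $N=Q_2^\alpha$---is exactly the paper's first (and essentially only) step: from there the paper simply invokes Theorem~2.3.1 of \cite{RealRobustnessRadii_PhDThesis_2011}, which states verbatim that
\begin{align*}
\inf_{\Delta\in\mathbb{R}^{q\times r}}\bigl\{\|\Delta\|:\text{rank}(M-\Delta N)<\min(q,t)\bigr\}
=\sup_{\gamma\in(0,1]}\sigma_{2\min(q,t)-1}\bigl(\Pi(\gamma,M),\Pi(\gamma,N)\bigr),
\end{align*}
so the lemma follows in one line. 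You instead set out to prove that identity. Your lower-bound half is sound: the identities $\Pi(\gamma,\Delta^r N)=(I_2\otimes\Delta^r)\Pi(\gamma,N)$, $\|I_2\otimes\Delta^r\|=\|\Delta^r\|$ and $\text{rank}\,\Pi(\gamma,M)=2\,\text{rank}_{\mathbb{C}}M$ are all correct, and relaxing the Kronecker structure legitimately yields $\|\Delta_s^{*^r}\|\geq\sup_{\gamma}\sigma_{2\delta-1}\bigl(\Pi(\gamma,\Lambda_s^\alpha Q_2),\Pi(\gamma,Q_2^\alpha)\bigr)$.

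The genuine gap is the reverse inequality. You correctly flag it as the crux, but you do not supply the argument: the claim that at an optimizing $\gamma^\star$ a minimum-norm \emph{unstructured} real perturbation realising $\sigma_{2\delta-1}$ can be chosen in the block form $I_2\otimes\Delta^r$ is precisely the content of the real-perturbation-value theorem you would be re-deriving (the Qiu--Bernhardsson--Rantzer--Davison--Young--Doyle mechanism that Theorem~2.3.1 of \cite{RealRobustnessRadii_PhDThesis_2011} packages), and it is the only nontrivial part of the whole statement. Saying that ``$\gamma$ is exactly the degree of freedom that aligns the real and imaginary parts'' restates the theorem rather than proving it; a complete argument must also handle the cases where the supremum is approached only as $\gamma\to 0^{+}$ and where $\sigma_{2\delta-1}$ has multiplicity, both of which require an explicit rank-two (or limiting) construction of the optimal real $\Delta^r$. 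As written, your proposal establishes one inequality and leaves the other at exactly the level of the citation the paper relies on; either cite that theorem directly, as the paper does, or carry out the full alignment construction.
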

\begin{proof}
By Theorem 2.3.1 of \cite{RealRobustnessRadii_PhDThesis_2011}, we have that for any two matrices $M\in\mathbb{C}^{q\times t}$ and $N\in\mathbb{C}^{r\times t}$, the following relation holds:
{\small\begin{align*}
&\inf\limits_{\Delta\in\mathbb{R}^{q\times r}}\{\|\Delta\|:\text{rank}(M-\Delta N)<\min(q,t)\}\\
&=\sup_{\gamma\in (0,1]}\sigma_{2\min(q,t)-1}\Bigg(\begin{bmatrix}Re(M)&\gamma Im(M)\\\gamma^{-1}Im(M)&Re(M)\end{bmatrix},\\
&\qquad\qquad\qquad\qquad\qquad\qquad\begin{bmatrix}Re(N)&\gamma Im(N)\\\gamma^{-1}Im(N)&Re(N)\end{bmatrix}\bigg).
\end{align*}}

If $M=\Lambda_s^\alpha Q_2$ and $N=Q_2^\alpha$, then the constraint in \eqref{eq:pertopt_prob1.8} is equivalent to rank$(M-\Delta_s^r N)<\min(q,t)=\delta$. Therefore, with the above relation, we get the infimum of $\|\Delta_s^r\|$ that solves \eqref{eq:pertopt_prob1.8} is as given in \eqref{eq:pertopt_prob1.9}. By definition, this infimum is equal to $\|\Delta_s^{*^r}\|$.
\end{proof}

With the value of $\|\Delta_s^{*^r}\|$ obtained from Lemma \ref{lem:Min_struct_norm}, $\Delta_s^{*^r}$ can be computed using the following theorem.
\begin{theorem}
Let the conditions in Lemma \ref{lem:condtns_complexpertgivens} hold. The following perturbation solves \eqref{eq:pertopt_prob1.8}:
{\small\begin{align*}
        \Delta_s^{*^r}=\begin{bmatrix}Re(\Lambda_s^\alpha Q_2d) & Im(\Lambda_s^\alpha Q_2d)\end{bmatrix}\begin{bmatrix}Re(Q_2^\alpha d) & Im(Q_2^\alpha d)\end{bmatrix}^\dag,\numberthis\label{eq:pertopt_prob1.sbp1.1}
    \end{align*}}

\noindent where $d$ is a complex vector that satisfies the hermitian-symmetric inequality:
{\small\begin{align*}
d^H\Big(\underbrace{\|\Delta_s^{*^r}\|^2Q_2^{\alpha^H} Q_2^\alpha-\Big(\Lambda_s^\alpha Q_2\Big)^H \Lambda_s^\alpha Q_2}_{\hat{H}}\Big)d\geq\\ \Big|d^T\Big(\underbrace{\|\Delta_s^{*^r}\|^2Q_2^{\alpha^T} Q_2^\alpha-\Big(\Lambda_s^\alpha Q_2\Big)^T \Lambda_s^\alpha Q_2}_{\hat{S}}\Big)d\Big|.\numberthis\label{eq:pertopt_prob1.10}
\end{align*}}
\end{theorem}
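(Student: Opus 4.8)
The plan is to show that the matrix in \eqref{eq:pertopt_prob1.sbp1.1} is both feasible for \eqref{eq:pertopt_prob1.8} and attains the optimal value $\|\Delta_s^{*^r}\|$ of Lemma \ref{lem:Min_struct_norm}. First I would eliminate the complex vector $d$ from the constraint of \eqref{eq:pertopt_prob1.8}. Setting $w := Q_2^\alpha d$ and $v := \Lambda_s^\alpha Q_2 d$ and using that $\Delta^r$ is real, the constraint $\Delta^r w = v$ is equivalent to the real matrix equation $\Delta^r B_d = C_d$ with $B_d := [\,Re(w)\ \ Im(w)\,]$ and $C_d := [\,Re(v)\ \ Im(v)\,]$. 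Hence \eqref{eq:pertopt_prob1.8} is an outer infimum over $d\neq 0$ of the inner problem $\min\{\|\Delta^r\|:\Delta^r B_d = C_d\}$. For the inner problem, every solution satisfies $\Delta^r B_d B_d^\dag = C_d B_d^\dag$, so $\|\Delta^r\|\ge\|\Delta^r B_d B_d^\dag\| = \|C_d B_d^\dag\|$ (since $\|B_d B_d^\dag\|\le 1$), and $\Delta^r = C_d B_d^\dag$ is itself a solution whenever the equation is consistent, i.e. $C_d B_d^\dag B_d = C_d$. Thus the minimum-norm inner solution is $C_d B_d^\dag$ — which is exactly \eqref{eq:pertopt_prob1.sbp1.1} — with norm $\|C_d B_d^\dag\|$, and by the definition of $\Delta_s^{*^r}$ we get $\|\Delta_s^{*^r}\| = \inf_{d}\|C_d B_d^\dag\|$, the infimum being over consistent $d\neq 0$ (for which $Q_2 d\neq 0$, meeting the $x\neq 0$ requirement).

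What then remains is to show that any $d$ satisfying \eqref{eq:pertopt_prob1.10} is consistent and attains this infimum. Write $\nu := \|\Delta_s^{*^r}\|$ (the value \eqref{eq:pertopt_prob1.9}); assume $\nu>0$, since $\nu=0$ means the unperturbed $\Lambda_s$ already drops rank and is trivial. I would establish the chain
\begin{equation*}
\|C_d B_d^\dag\|\le\nu \;\Longleftrightarrow\; \nu^2 B_d^T B_d\succeq C_d^T C_d \;\Longleftrightarrow\; \operatorname{tr}(M_d)\ge 0 \text{ and } \det(M_d)\ge 0,
\end{equation*}
where $M_d := \nu^2 B_d^T B_d - C_d^T C_d \in\mathbb{R}^{2\times 2}$ is symmetric: the first equivalence is a Schur-complement argument on $\|C_d B_d^\dag\|^2\le\nu^2$ together with $B_d^\dag(B_d^\dag)^T = (B_d^T B_d)^\dag$, and the second is the trace/determinant test for positive semidefiniteness of a $2\times 2$ symmetric matrix. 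Then a direct computation gives $\operatorname{tr}(M_d) = \nu^2 w^H w - v^H v = d^H\hat H d$ and, using $d^T\hat S d = \nu^2 w^T w - v^T v$, the identity $(\operatorname{tr} M_d)^2 - |d^T\hat S d|^2 = 4\det(M_d)$; hence ``$\operatorname{tr}(M_d)\ge 0$ and $\det(M_d)\ge 0$'' collapses to $d^H\hat H d\ge|d^T\hat S d|$, i.e. \eqref{eq:pertopt_prob1.10}. Therefore \eqref{eq:pertopt_prob1.10} is equivalent to $\nu^2 B_d^T B_d\succeq C_d^T C_d$, and the latter forces $\ker B_d\subseteq\ker C_d$, which is exactly the consistency condition $C_d B_d^\dag B_d = C_d$, while also yielding $\|C_d B_d^\dag\|\le\nu$. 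Since $\nu$ is the infimum over consistent $d$, this forces $\|C_d B_d^\dag\| = \nu$, so $\Delta_s^{*^r}$ as in \eqref{eq:pertopt_prob1.sbp1.1} is feasible and optimal for \eqref{eq:pertopt_prob1.8}.

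I expect the main obstacle to be the last step's bookkeeping when $B_d$ is rank-deficient (equivalently when $Re(Q_2^\alpha d)$ and $Im(Q_2^\alpha d)$ are linearly dependent): then $B_d^T B_d$ is singular and the Schur complement, as well as $B_d^\dag(B_d^\dag)^T = (B_d^T B_d)^\dag$, must be handled with pseudoinverses, and one must check that the semidefinite inequality $\nu^2 B_d^T B_d\succeq C_d^T C_d$ still correctly encodes consistency of $\Delta^r B_d = C_d$ in that case. A secondary point is to confirm that the infimum $\inf_d\|C_d B_d^\dag\|$ is actually attained by some $d$ satisfying \eqref{eq:pertopt_prob1.10} — equivalently that the supremum over $\gamma\in(0,1]$ in \eqref{eq:pertopt_prob1.9} is attained — which I would take from the generalized-singular-vector construction behind Lemma \ref{lem:Min_struct_norm} (Theorem 2.3.1 of \cite{RealRobustnessRadii_PhDThesis_2011}), reading off $d$ from the corresponding generalized singular vector.
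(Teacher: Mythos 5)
Your argument is correct, but it takes a more self-contained route than the paper. The paper proves this theorem purely by citation: the formula \eqref{eq:pertopt_prob1.sbp1.1} is obtained by invoking Proposition 6.3.6 of \cite{GeomSpecValSets_PhDThesis_2003} with $X=Q_2^\alpha d$, $Y=\Lambda_s^\alpha Q_2 d$, and the condition \eqref{eq:pertopt_prob1.10} is obtained by invoking Proposition 6.4.3 of the same thesis and specializing to $w=1$ so that the quadratic forms in $\xi$ cancel. You instead re-derive both facts from first principles in exactly this rank-one special case: the projection argument $\|\Delta^r\|\geq\|\Delta^r B_dB_d^\dag\|=\|C_dB_d^\dag\|$ recovers Proposition 6.3.6, and the reduction of $\nu^2B_d^TB_d\succeq C_d^TC_d$ to the trace/determinant test for a $2\times 2$ symmetric matrix, together with the identity $(\operatorname{tr}M_d)^2-|d^T\hat{S}d|^2=4\det(M_d)$, recovers Proposition 6.4.3. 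What your route buys is twofold: it avoids the external citations, and it establishes \eqref{eq:pertopt_prob1.10} as \emph{equivalent} to consistency plus $\|C_dB_d^\dag\|\leq\nu$, i.e.\ it proves that \emph{any} $d$ satisfying \eqref{eq:pertopt_prob1.10} yields an optimal feasible perturbation; the paper's proof only argues the necessity direction (that the optimal $d$ satisfies \eqref{eq:pertopt_prob1.10}), even though the sufficiency direction is what the theorem statement actually asserts. Two points to tighten: the first displayed equivalence should read ``$\|C_dB_d^\dag\|\leq\nu$ \emph{and} $\ker B_d\subseteq\ker C_d$'' on the left (otherwise inconsistent $d$ with small $\|C_dB_d^\dag\|$ are counterexamples to the forward implication), and the attainment of the infimum by some nonzero $d$ does need to be anchored in Lemma \ref{lem:condtns_complexpertgivens} and the generalized-singular-vector construction behind \eqref{eq:pertopt_prob1.9}, as you note; neither issue is a gap in the idea, only in the bookkeeping.
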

\begin{proof}
We first prove $\Delta_s^{*^r}$ in \eqref{eq:pertopt_prob1.sbp1.1} solves \eqref{eq:pertopt_prob1.8}. From Proposition 6.3.6 of \cite{GeomSpecValSets_PhDThesis_2003}, we have that for any two matrices $X\in\mathbb{C}^{r\times w}$ and $Y\in\mathbb{C}^{q\times w}$, if there exists $\Delta$ such that $\Delta X=Y$, then the choice $\Delta^*=\begin{bmatrix}Re(Y)&Im(Y)\end{bmatrix}\begin{bmatrix}Re(X)&Im(X)\end{bmatrix}^\dag$ satisfies 
\begin{align*}
\Delta^* X&=Y \:\text{and}\numberthis\label{eq:pertopt_prob1.11}\\ \|\Delta^*\|&=\inf\limits_{\Delta\in\mathbb{R}^{q\times r}}\{\|\Delta\|:\Delta X=Y\}.\numberthis\label{eq:pertopt_prob1.12}
\end{align*}

Since the conditions in Lemma \ref{lem:condtns_complexpertgivens} hold, we have that there exists $d\neq 0$ such that \eqref{eq:pertopt_prob1.8} holds. Therefore, with the above result, we have that for $X=Q_2^\alpha d$ and $Y=\Lambda_s^\alpha Q_2 d$, it holds that $\Delta_s^{*^r}$ in \eqref{eq:pertopt_prob1.sbp1.1} satisfies \eqref{eq:pertopt_prob1.12}, and thus solves \eqref{eq:pertopt_prob1.8}.\par 
Next we prove that $d$ satisfies \eqref{eq:pertopt_prob1.10}. By Proposition 6.4.3 of \cite{GeomSpecValSets_PhDThesis_2003}, we have for all $\xi\in\mathbb{C}^w$ the following condition is satisfied by $X,Y$ and $\|\Delta^*\|$:
\begin{align*}
\xi^H(\|\Delta^*\|^2X^HX-Y^HY)\xi\geq|\xi^T(\|\Delta^*\|^2X^TX-Y^TY)\xi|\numberthis\label{eq:pertopt_prob1.13}
\end{align*}

Since $w=1$ for our choice of $X$ and $Y$, we have that $\xi\in\mathbb{C}$, and consequently, $\xi^H\xi=|\xi|^2$ on the left hand side gets cancelled with $|\xi^T||\xi|=|\xi|^2$ on the right, in the inequality in \eqref{eq:pertopt_prob1.13}. Let $\Delta^*=\Delta_s^{*^r}$. Hence, the condition \eqref{eq:pertopt_prob1.13} reduces to:
\begin{align*}
&\|\Delta_s^{*^r}\|^2(Q_2^\alpha d)^H(Q_2^\alpha d)-(\Lambda_s^\alpha Q_2 d)^H(\Lambda_s^\alpha Q_2 d)\geq\\
&\qquad\qquad\qquad\left|\|\Delta_s^{*^r}\|^2(Q_2^\alpha d)^T(Q_2^\alpha d)-(\Lambda_s^\alpha Q_2 d)^T(\Lambda_s^\alpha Q_2 d)\right|\\
&\iff d^H\left(\|\Delta_s^{*^r}\|^2Q_2^{\alpha^H} Q_2^\alpha-\left(\Lambda_s^\alpha Q_2\right)^H \Lambda_s^\alpha Q_2\right)d\geq\\ &\qquad\qquad\qquad\left|d^T\left(\|\Delta_s^{*^r}\|^2Q_2^{\alpha^T} Q_2^\alpha-\left(\Lambda_s^\alpha Q_2\right)^T \Lambda_s^\alpha Q_2\right)d\right|,
\end{align*}

\noindent which is the required condition that $d$ must satisfy, as given by \eqref{eq:pertopt_prob1.10}.
\end{proof}

All vectors $d$ that satisfy \eqref{eq:pertopt_prob1.10} forms a subspace $\mathcal{D}$. Further, any $d\in\mathcal{D}$ results in the same $\Delta_s^{*^r}$ in \eqref{eq:pertopt_prob1.sbp1.1}. An algorithm to compute a basis for $\mathcal{D}$ when $E$, $G$ are identity is given in \cite{DFMRadius_IFAC_2008}. This algorithm is modified to compute $d$ for the structured case as follows.

\begin{lemma}\label{lem:dbasis_pert}
Using the algorithm presented in Section 3.1 of \cite{DFMRadius_IFAC_2008}, a basis for $\mathcal{D}$ can be computed by choosing $\hat{H}=\|\Delta_s^{*^r}\|^2Q_2^{\alpha^H} Q_2^\alpha-\left(\Lambda_s^\alpha Q_2\right)^H \Lambda_s^\alpha Q_2$ and $\hat{S}=\|\Delta_s^{*^r}\|^2Q_2^{\alpha^T} Q_2^\alpha-\left(\Lambda_s^\alpha Q_2\right)^T \Lambda_s^\alpha Q_2$ (as given in \cite{DFMRadius_IFAC_2008}).
\end{lemma}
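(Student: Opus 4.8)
The plan is to show that the defining inequality \eqref{eq:pertopt_prob1.10} for $\mathcal{D}$ has exactly the same algebraic form as the inequality handled by the algorithm in Section 3.1 of \cite{DFMRadius_IFAC_2008}, so that the algorithm applies verbatim once the pair $(\hat H,\hat S)$ appropriate to the structured problem is substituted for the pair used in the unstructured ($E=G=I$) setting.

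First I would recall that the algorithm in Section 3.1 of \cite{DFMRadius_IFAC_2008} takes as input a Hermitian matrix $H$ and a (complex) symmetric matrix $S$ of the same size and returns a basis for the set of vectors $d$ satisfying an inequality of the form $d^H H d \ge |d^T S d|$; in the setting treated there the relevant matrices are $H=\|\Delta^*\|^2 X^H X - Y^H Y$ and $S=\|\Delta^*\|^2 X^T X - Y^T Y$ for the corresponding $X,Y$. The key observation is that the construction in \cite{DFMRadius_IFAC_2008} never uses the specific provenance of $H$ and $S$: it relies only on $H=H^H$, $S=S^T$, and on the fact that the solution set is a nonempty subspace. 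Hence the substantive step is to verify these hypotheses for the $\hat H$ and $\hat S$ given in the statement. Hermiticity of $\hat H$ and symmetry of $\hat S$ are immediate: $Q_2^{\alpha^H}Q_2^\alpha$ and $(\Lambda_s^\alpha Q_2)^H\Lambda_s^\alpha Q_2$ are Hermitian, $Q_2^{\alpha^T}Q_2^\alpha$ and $(\Lambda_s^\alpha Q_2)^T\Lambda_s^\alpha Q_2$ are symmetric, and the real-coefficient differences inherit these properties. Nonemptiness — indeed that $\mathcal{D}$ contains a nonzero vector — follows from Lemma \ref{lem:condtns_complexpertgivens} together with the preceding theorem: the conditions of Lemma \ref{lem:condtns_complexpertgivens} guarantee a nonzero $d$ solving \eqref{eq:pertopt_prob1.8}, and by the theorem every such $d$ satisfies \eqref{eq:pertopt_prob1.10}, i.e.\ lies in $\mathcal{D}$; that the full solution set of \eqref{eq:pertopt_prob1.10} is a subspace is already recorded in the discussion preceding the lemma. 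With these checks in place, I would then instantiate the algorithm of Section 3.1 of \cite{DFMRadius_IFAC_2008} with $H\mapsto\hat H$ and $S\mapsto\hat S$ and conclude that its output is a basis for $\mathcal{D}$.

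The main obstacle I anticipate is not computational but one of careful reduction: confirming that the algorithm of \cite{DFMRadius_IFAC_2008} genuinely uses only the Hermitian/symmetric structure of its inputs and not some extra feature special to the unstructured problem — for instance particular rank, definiteness, or dimensionality properties of $H$ and $S$ there. Making the reduction rigorous amounts to walking through the derivation in Section 3.1 of \cite{DFMRadius_IFAC_2008} and checking that each step remains valid for an arbitrary Hermitian $\hat H$ and symmetric $\hat S$ of the relevant size; everything else in the argument is routine verification.
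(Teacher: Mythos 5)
Your proposal is correct and follows essentially the same route as the paper: both arguments reduce to checking that $\hat{H}$ is Hermitian and $\hat{S}$ is (complex) symmetric so that the conditions of Section 3.1 of \cite{DFMRadius_IFAC_2008} are met, and then invoke that algorithm's block-diagonalization to extract a basis for $\mathcal{D}$ (the paper additionally notes that for real $s$ the two matrices coincide and are real symmetric). Your extra attention to nonemptiness of $\mathcal{D}$ via Lemma \ref{lem:condtns_complexpertgivens} and the preceding theorem is a reasonable refinement but not a different method.
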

\begin{proof}
By Proposition 6.4.3 of \cite{GeomSpecValSets_PhDThesis_2003}, we require $d$ to belong to a subspace that satisfies \eqref{eq:pertopt_prob1.10} so that $\|\Delta_s^r\|$ is the minimum given by \eqref{eq:pertopt_prob1.9}. When $s$ is real, this choice of $\hat{H}$ and $\hat{S}$ is real and symmetric, and $\hat{H}=\hat{S}$. When $s$ is not real, $\hat{H}^H=\hat{H}$ and $\hat{S}^T=\hat{S}$. Hence, we have that $\hat{H}$ and $\hat{S}$ satisfy the conditions in Section 3.1 of \cite{DFMRadius_IFAC_2008}. Therefore, there exists a matrix $P$ that block-diagonalizes $\hat{H}$ and $\hat{S}$, according to the algorithms in \cite{DFMRadius_IFAC_2008}. Consequently, the columns of $P$  may be extracted to compute $d$ that satisfies \eqref{eq:pertopt_prob1.10}.
\end{proof}

Finally, $\Delta_s^*$ can be obtained by expanding $\Delta_s^{*^r}$ by inserting zero-rows and zero-columns in appropriate places, such that $E\Delta_s^* G=\Delta_s^*$. This $\Delta_s^*$ is one of the solutions to \eqref{eq:pertopt_prob1.6}, and other solutions may exist. However, the norms of all the solutions are equal. 

\textbf{Example \ref{ex:2} (Contd.)} We take the QR decomposition of $\Lambda_s^{\beta^H}$ to obtain $Q_2^\alpha$ as:
{\small\begin{align*}
Q_2^\alpha = \begin{bmatrix}
-0.0441 - 0.3712j \\
0.7653 + 0.3568j
\end{bmatrix}.
\end{align*}}

With Lemma \ref{lem:dbasis_pert}, we compute a basis for $\mathcal{D}$, which we then use to compute the required $\Delta^r$ for the modified system (rounded to 4 decimal points) from \eqref{eq:pertopt_prob1.sbp1.1}:
{\small\begin{align*}
d &= 3.7905-11.038j,\\
\Delta^r &= \begin{bmatrix}-0.0341 & -0.2048\\
0.0682 & -0.0307\end{bmatrix}.
\end{align*}}

It can be seen that $\|\Delta^r\|=0.2086$, which equals $\|\Delta_s^*\|$. Due to the perturbation, only $A$ gets perturbed, such that 
{\small\begin{align*}
A_P=\begin{bmatrix}0.7741 & -0.12 & -0.1752 \\
-0.69 & 1.62 & -0.21 \\
-2.1482 & 0.63 & 0.1707\end{bmatrix}.
\end{align*}}
The perturbed system has invariant zeros at $s=0.8297\pm 0.5583j$. Consequently, we have $\mathcal{V}(\Gamma^P)=\text{span}(I_3)\neq\{0\}$, thus permitting opaque sets.\hfill$\square$

\subsection{Sub-problem 2: $s\in\mathbb{C}$ with Infimum $\|\Delta_s^*\|$}\label{sec:s_forpert}
In this sub-problem, we find the $s$ for which the corresponding $\Delta_s^*$ has the minimum norm over all $s\in\mathbb{C}$, that is
\begin{equation}\label{eq:pertopt_prob1.14}
\begin{aligned}
    \arginf\limits_{s\in\mathbb{C}}&\quad \|\Delta_s^*\|.
\end{aligned}
\end{equation}

To solve \eqref{eq:pertopt_prob1.14} for unstructured perturbations (that is, for $E=I_{n+m},G=I_{n+p}$), algorithms that find the global minimum have been developed in the literature\cite{FastAlgo_IEEECDC_2008}. However, as per our knowledge, such algorithms for the structured case (that is, for $E$ and $G$ are not both identity matrices) have not been developed. We first summarize Algorithm 4.1 of \cite{FastAlgo_IEEECDC_2008}, which finds the minimum norm for unstructured perturbation. We then modify this algorithm to find the minimum norm of structured perturbations.

\paragraph{Algorithm for Unstructured Perturbation (Algorithm 4.1 of \cite{FastAlgo_IEEECDC_2008})}
An important constituent of the algorithm is a line search method that finds the minimum of a function, say $f(x)$, over the positive real line $\mathbb{R}^{+}$. The method is initialized with a (possibly disjoint) region $R_0$. In iteration $k$, a random point $x_k$ is picked from $R_{k-1}$, and the region $R_k$ is computed as:
{\small\begin{align*}
R_k = \{x: f(x) < f(x_k) \}. 
\end{align*}}
To compute $R_k$, the points $\{x:f(x)= f(x_k)\}$ are computed using Theorem 3.2 of \cite{FastAlgo_IEEECDC_2008}. This method ensures that $R_k$ shrinks as iterations progress and eventually converges to the minimum.

The algorithm approximates the search on the whole complex plane (in \eqref{eq:pertopt_prob1.14}) by a search on $L$ half lines in the first and second quadrants of the complex plane. These half lines are specified by their angles which form the set:
{\small\begin{align*}
\Theta_0 = \{\theta_1 =0 < \theta_2 < \theta_3 < \cdots < \theta_{L-1} < \pi = \theta_L\}.
\end{align*}}

Let $R_k^{\theta}$ denote the region corresponding to the line at angle $\theta$ at iteration $k$. The algorithm first finds (say at iteration $k_1$) the minimum value of $\|\Delta_{s}^*\|$ on the $\theta_1=0$ line. It then uses this minimum to initialize the region $R_{k_1+1}^{\theta_2}$ on the $\theta_2$ line, and finds the minimum on the $\theta_2$ line. In this way, it sweeps through all lines and finds the minimum over all lines.

Next, we build on the above algorithm to compute the minimum norm of structured perturbations.
\paragraph{Algorithm for Structured Perturbation}
The solution to the optimization problem \eqref{eq:pertopt_prob1.14} requires $s\in\mathbb{C}$ to satisfy the conditions in Lemma \ref{lem:condtns_complexpertgivens}. We assume that Lemma \ref{lem:condtns_complexpertgivens} is satisfied for all values of $s\in\mathbb{C}$. The case where this does not hold is elaborated in Remark \ref{rem:lem2_finites}.
We solve \eqref{eq:pertopt_prob1.14} for structured perturbations by modifying Algorithm 4.1 of \cite{FastAlgo_IEEECDC_2008} using \eqref{eq:pertopt_prob1.9}. In particular, the following three modifications are made:\\
(i) The values of $\|\Delta_{s_k}^*\|$ and $R_k^\theta$ are computed for the structured perturbation $\|\Delta^r\|$ (found using \eqref{eq:pertopt_prob1.9}).\\ 
(ii) The endpoints of $R_k^\theta$ (Step 3a of Algorithm \ref{alg:IV.2.2}) are determined using an approximation on $\sigma_{2\delta-1}\big(\Pi(\gamma_{k-1},\Lambda_{|s|e^{j\theta}}^\alpha Q_2),\Pi(\gamma_{k-1},Q_2^\alpha)\big)$. The details of the approximation and its accuracy are found in Appendix \ref{app:approxanalysis}, and computation of $R_k^\theta$ is found below.\\
(iii) We construct $\Theta_0$ by taking discrete values in $[0,2\pi]$, as against $[0,\pi]$ used in the unstructured case (for which symmetry was proven).\par
The endpoints of $R_k^\theta$, that is, the set of $|s|$ for a given $\theta$ in Step 3a of Algorithm \ref{alg:IV.2.2}, are computed by using Theorem 3.2 of \cite{FastAlgo_IEEECDC_2008} on an approximation of $\sigma_{2\delta-1}\big(\Pi(\gamma_{k-1},\Lambda_{|s|e^{j\theta}}^\alpha Q_2)$, $\Pi(\gamma_{k-1},Q_2^\alpha)\big)$.
This approximation is found as follows. As shown previously, the infimum norm of the solution to \eqref{eq:pertopt_prob1.3} is given by \eqref{eq:pertopt_prob1.9}. Therefore, we find an approximate solution to \eqref{eq:pertopt_prob1.3}. This is done using the approach shown in \cite{RealRobustnessRadii_PhDThesis_2011} as follows. First we convert $E$ and $G$ into non-singular square matrices $\breve{E}$ and $\breve{G}$, respectively. This is done by appending zero rows and zero columns and then adding a small value $\epsilon$ to the zero singular values of $E$ and $G$, so that these matrices become invertible. It is shown in \cite{RealRobustnessRadii_PhDThesis_2011} that for $\breve{E}$ we have $\|\breve{E}-E\|=\epsilon$ and the condition number of $\breve{E}$ is given by $\sigma_1(\breve{E})/\epsilon$ (similar relations hold for $\breve{G}$). 

With the approximation described previously, we note that the infimum norm of the solution to \eqref{eq:pertopt_prob1.3} is approximately \footnote{Empirically very close for the real matrices $E$ and $G$ (for instance, refer to \cite{RealRobustnessRadii_PhDThesis_2011} which approximates only $E$). Further, since $E$ and $G$ are diagonal matrices, $\breve{E}$ and $\breve{G}$ are invertible for any value of $\epsilon$. The numerical precision of the approximation depends on the software used. Our analysis on this approximation has mainly been empirical and we are yet to prove theoretical bounds on its accuracy. Also refer to the Appendix \ref{app:approxanalysis} for some preliminary analysis.}{equal} to the solution to the following optimization problem:
{\small\begin{equation}\label{eq:pertopt_prob_set_1}
\begin{aligned}
&\inf\limits_{s\in\mathbb{C},\:\breve{\Delta}\in\mathbb{R}^{(n+m)\times (n+p)}}&&\|\breve{\Delta}\|\\\
    &\qquad\:\text{s.t.} &&  \text{rank}\left(\breve{E}^{-1}\breve{\Lambda}_s\breve{G}^{-1}-\breve{\Delta}\right)<n+p.
\end{aligned}
\end{equation}}

Let $\breve{\Delta}_s^*$ be the solution to \eqref{eq:pertopt_prob_set_1}. We have \cite{Transmissionzerosradius_IFAC_2008,RealRobustnessRadii_PhDThesis_2011}:
{\small\begin{align*}
    \|\breve{\Delta}_s^*\|=\sup\limits_{\gamma\in (0,1]}\sigma_{2(n+p)-1}\big(\Pi(\gamma,\breve{E}^{-1}\breve{\Lambda}_s\breve{G}^{-1})\big)\numberthis\label{eq:approx_norm}
\end{align*}}

With this, we obtain the approximate values of the set of $|s|$ by finding real, non-negative eigenvalues of $\mathcal{H}(\|\Delta_{s_{k-1}}^*\|,\hat{A},\hat{B},\hat{C},\hat{D},\hat{\Delta},Y_{\gamma_{k-1,n+m}}^{-1}\widehat{\breve{E}^{-1}}P_{nm},P_{np}\widehat{\breve{G}^{-1}}Y_{\gamma_{k-1},n+p})$ for which the singular value $\sigma_{2(n+p)-1}\big(\Pi(\gamma_{k-1},\breve{E}^{-1}\breve{\Lambda}_{|s|e^{j\theta}}\breve{G}^{-1})\big)=\|\Delta_{s_{k-1}}^*\|$, where
{\small\begin{align*}
    \hat{A}&=\begin{bmatrix}
        A&0\\0&A
    \end{bmatrix},
    \hat{B}=\begin{bmatrix}
        B&0\\0&B
    \end{bmatrix},\\
    \hat{C}&=\begin{bmatrix}
        C&0\\0&C
    \end{bmatrix},
    \hat{D}=\begin{bmatrix}
        D&0\\0&D
    \end{bmatrix},
    \hat{\Delta}=\begin{bmatrix}
        e^{i\theta}I_{n}&0\\0&e^{-i\theta}I_{n}
    \end{bmatrix}\\
    \widehat{{\breve{E}^{-1}}}&=\begin{bmatrix}
        \breve{E}^{-1}&0\\0&\breve{E}^{-1}
    \end{bmatrix},
    \widehat{{\breve{G}^{-1}}}=\begin{bmatrix}
        \breve{G}^{-1}&0\\0&\breve{G}^{-1}
    \end{bmatrix},\\
    P_{nm}&=\begin{bmatrix}
        I_{n}&0&0&0\\
        0&0&I_{m}&0\\
        0&I_n&0&0\\
        0&0&0&I_m
    \end{bmatrix},
    P_{np}=\begin{bmatrix}
        I_{n}&0&0&0\\
        0&0&I_{n}&0\\
        0&I_p&0&0\\
        0&0&0&I_p
    \end{bmatrix},\\
    Y_{\gamma,i}&\triangleq \frac{1}{2\gamma}\begin{bmatrix}
        \sqrt{1+\gamma^2}I_i&j\gamma\sqrt{1+\gamma^2}I_i\\
        \sqrt{1+\gamma^2}I_i&-j\gamma\sqrt{1+\gamma^2}I_i
    \end{bmatrix},
\end{align*}}

\noindent and the function $\mathcal{H}$ is as defined in \cite{FastAlgo_IEEECDC_2008}. The above procedure to find the approximate set of $|s|$ is based on applying Theorem 3.2 of \cite{FastAlgo_IEEECDC_2008} on the matrices of \eqref{eq:pertopt_prob_set_1}. Note that this set is very close to the actual set. Further, the actual set may be found by taking trial points in the vicinity of the approximate set that satisfy $\sigma_{2\delta-1}\big(\Pi(\gamma_{k-1},\Lambda_{|s|e^{j\theta}}^\alpha Q_2),\Pi(\gamma_{k-1},Q_2^\alpha)\big)=\|\Delta_{s_{k-1}}^*\|$ (Step 3a of Algorithm \ref{alg:IV.2.2}), or by finding the roots numerically with initial points taken as $|s|e^{j\theta}$. 

The remainder of the algorithm remains the same as in \cite{FastAlgo_IEEECDC_2008}, and is described in Algorithm \ref{alg:IV.2.2}.
To summarize, Algorithm \ref{alg:IV.2.2} solves \eqref{eq:pertopt_prob1.14}, and with the optimal value of $s$, we can solve \eqref{eq:pertopt_prob1.3} (and hence, \eqref{eq:pertopt_prob1.1}) by using \eqref{eq:pertopt_prob1.sbp1.1}.

\textbf{Example \ref{ex:2} (Contd.)} Using Algorithm \ref{alg:IV.2.2}, we obtain the optimal values of $s\in\mathbb{C}$ that minimizes $\|\Delta_s^*\|$ as $0.8297+0.5583j$ for which $\|\Delta_s^*\|=0.2086$. As expected, this norm is greater than that in \cite{Transmissionzerosradius_IFAC_2008}, since we are considering structured perturbations. Figure \ref{fig:Example_Pert} shows a surface plot of $\|\Delta_s^*\|$, corresponding to different values of $s$ near the origin. It is evident from this plot that $\|\Delta^r\|$ (and hence $\|\Delta_s^*\|$) is minimum at $s=0.8297+0.5583j$.\hfill$\square$

\begin{algorithm}
\caption[Minimum Perturbation Norm (Structured)]{Minimum Perturbation Norm (Structured)}\label{alg:IV.2.2}
\textbf{Input:} Matrices $A$, $B$, $C$, $D$, $E$, $G$.\\
\textbf{Output:} $s$ and $\|\Delta_{s}^*\|$ which solves \eqref{eq:pertopt_prob1.14}.\\~\\
\textbf{Step 1:} Set $\Theta_0=\{\theta_1,\theta_2,\cdots,\theta_{L}\}$ and $k=0$. Choose arbitrary $s_0\in\mathbb{C}$. Compute $\|\Delta_{s_0}^*\|$ and $\gamma_0$ such that $\|\Delta_{s_0}^*\|=\sigma_{2\delta-1}\big(\Pi(\gamma_0,\Lambda_{s_0}^\alpha Q_2),\Pi(\gamma_0,Q_2^\alpha)\big)$.\\
\textbf{Step 2:} Update $k$ to $k+1$.\\
\textbf{Step 3a:} For each $\theta\in \Theta_{k-1}$, find set of $|s|$ that satisfies $\sigma_{2\delta-1}\big(\Pi(\gamma_{k-1},\Lambda_{|s|e^{j\theta}}^\alpha Q_2),\Pi(\gamma_{k-1},Q_2^\alpha)\big)=\|\Delta_{s_{k-1}}^*\|$ (using Theorem 3.2 of \cite{FastAlgo_IEEECDC_2008} on the approximation).\\
\textbf{Step 3b:} For each $\theta\in \Theta_{k-1}$, let $R_k^\theta$ denote those regions of $s\in\mathbb{C}$ that satisfy $\|\Delta_{s}^*\|<\|\Delta_{s_{k-1}}^*\|$. The set of $|s|$ in Step 3a forms the endpoints of the intervals that form $R_k^\theta$. Find $R_k^\theta$ using trial points in between these values of $|s|$.\\
\textbf{Step 4:} Update $\Theta_k$ by removing those $\theta$ for which $R_k^\theta=\phi$.\\
\textbf{Step 5:} Choose any $|s_k|\in R_k^\theta$ and corresponding $\theta$. Set $s_k=|s_k|e^{j\theta}$. (Note that since $|s_k|\in R_k^\theta$, $\|\Delta_{s_k}^*\|<\|\Delta_{s_{k-1}}^*\|$).\\
\textbf{Step 6:} From  $\|\Delta_{s_k}^*\|$, update value of $\gamma_k$.\\
\textbf{Step 7:} Stop if tolerance levels are violated. Else, go to Step 2.
\end{algorithm}

\begin{remark}\label{rem:lem2_finites}
In the above, we assumed that Lemma \ref{lem:condtns_complexpertgivens} is satisfied for all values of $s\in\mathbb{C}$. Let us consider the case where it is satisfied for finite $s\in\mathbb{C}$. Hence, we have the following two cases:\\
(i) $Q_2^\alpha=0$ for all but finite values of $s\in\mathbb{C}$. In this case, we can find the values of $s$ with the following steps:
First, let $x$ be split up row-wise into $x^\alpha=Jx$ and $x^\beta$. Next, we split $\Lambda_s^{\beta}$ column-wise into two matrices $\Lambda_s^{\beta,\alpha}$ and $\Lambda_s^{\beta,\beta}$, such that  the column indices of these matrices in $\Lambda_s^\beta$ are equal to the row indices of $x^\alpha$ and $x^\beta$, respectively, in $x$. From the second constraint of \eqref{eq:pertopt_prob1.7}, we have 
{\small\begin{align*}
\Lambda_s^{\beta} x=0\iff\Lambda_s^{\beta,\alpha} x^\alpha&=0,\\
\qquad\qquad\qquad\qquad\qquad\Lambda_s^{\beta,\beta} x^\beta&=0.
\end{align*}}

From the first equation above, we have
{\small\begin{align*}
\Lambda_s^{\beta,\alpha} x^\alpha=\Lambda_s^{\beta,\alpha} Jx=\Lambda_s^{\beta,\alpha} JQ_2d=\Lambda_s^{\beta,\alpha} Q_2^\alpha d=0.
\end{align*}}

Since $Q_2^\alpha = 0 \iff Q_2^\alpha d = 0\: \forall d$, we have that the above is equivalent to the fact that the nullspace of $\Lambda_s^{\beta,\alpha}$ is $\{0\}$, that is, $\Lambda_s^{\beta,\alpha}$ has  full column rank. Therefore, we find those values of $s\in\mathbb{C}$ that reduce the column rank of $\Lambda_s^{\beta,\alpha}$. This is done by performing column-exchanges to convert $\Lambda_s^{\beta,\alpha}$ in form of  matrix in \eqref{eq:inv_zero}, and then finding the invariant zeros of the corresponding system.\\
(ii) rank($\Lambda_s^\beta$) $=n+p$ for all but finite values of $s\in\mathbb{C}$.  
To find this set of $s$, column-exchanges are performed to convert $\Lambda_s^{\beta}$ in the form of matrix in \eqref{eq:inv_zero}. From this new matrix, the invariant zeros of the corresponding system are found.

To satisfy Lemma \ref{lem:condtns_complexpertgivens}, we require that $Q_2^\alpha\neq 0$ and rank($\Lambda_s^\beta$) $<n+p$. Since in both cases above, the set of $s\in\mathbb{C}$ that satisfies both conditions is finite, it is easy to compute $\|\Delta_s^*\|$ for all $s$ in this finite set, and obtain the minimum among these.

Further, from the above cases, and since invariant zeros cover either the whole complex plane or are countably finite, $s\in\mathbb{C}$ that satisfies Lemma \ref{lem:condtns_complexpertgivens} is either finite or cover the entire complex plane.\hfill$\square$
\end{remark}


\begin{figure}[htp]
    \vspace*{1mm}
    \centering
    \includegraphics[trim=2cm 0 0cm 0,clip,width=9.2cm]{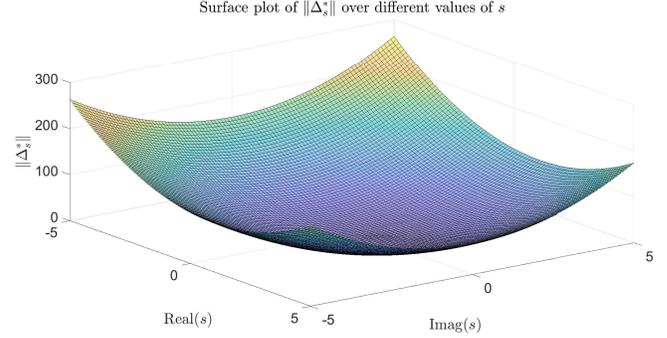}
    \caption[Surface Plot of $\|\Delta_s^*\|$ in Example \ref{ex:2}]{Surface plot of $\|\Delta_s^*\|$.  $s\in\mathbb{C}$ lies in the square area formed by the vertices $s=\pm 5\pm 5j$. The minimum of $\|\Delta_s^*\|=0.2086$ occurs at $s=0.8297+0.5583j$.}
    \label{fig:Example_Pert}
\end{figure}
\textbf{Example \ref{ex:2} (Contd.)} We consider the case where $A_0(1,1)$ and $A_0(3,1)$ are perturbed, which translates to perturbing the elements $A(1,1)$ and $A(1,3)$. We have
{\small\begin{align*}
\Lambda_s^\beta &= \begin{bmatrix}-0.69 & 1.62-sj & -0.21 & 0.71\\
-2.08 & 0.63 & 0.14-sj & 0.61\\
-1.23 & 1.02 & -0.66 & 1.33 \\ -0.26 & 2.51 & 1.13 & -2.89\end{bmatrix}.
\end{align*}}

For this case, we have that $\Lambda_s^\beta$ has rank equal to $n+p=4$ for all but finite $s\in\mathbb{C}$. Hence, the conditions in Lemma \ref{lem:condtns_complexpertgivens} are satisfied for only finite $s\in\mathbb{C}$. Therefore, we transform $\Lambda_s^\beta$, in order to find $s$ for which the rank is reduced. 

For this we perform row exchanges on $\Lambda_s^{\beta^H}$, such that the first row of $\Lambda_s^{\beta^H}$ becomes the third row, and the second and third rows of $\Lambda_s^{\beta^H}$ become the first and second rows, respectively:
{\small\begin{align*}
&&\text{rank}(\Lambda_s^{\beta^H})&<n+p\\
&\iff&\!\!\!\!\!\!\! \text{rank}\left(\begin{bmatrix}
1.62+\bar{s}j & 0.63 & 1.02 & 2.51\\
-0.21 & 0.14+\bar{s}j & -0.66 & 1.13 \\
-0.69 & -2.08 & -1.23 & -0.26\\
0.71 & 0.61 & 1.33 & -2.89
\end{bmatrix}\right)&<n+p,
\end{align*}}

\noindent where $\bar{s}$ is the conjugate of $s$. The set of $s\in\mathbb{C}$ that solves the above rank-reduction problem, can be calculated by finding the set of invariant zeros of the corresponding system with matrices:
{\small\begin{align*}
A&=\begin{bmatrix}1.62&0.63\\-0.21&0.14\end{bmatrix},B=\begin{bmatrix}1.02&2.51\\-0.66&1.13\end{bmatrix},\\
C&=\begin{bmatrix}-0.69&-2.08\\0.71&0.61\end{bmatrix},D=\begin{bmatrix}-1.23&-0.26\\1.33&-2.89\end{bmatrix}.
\end{align*}}

Accordingly, we get the invariant zeros as $s=0.8108\pm 0.5367j$. For these values of $s$, we have that rank$(\Lambda_s^{\beta^H})=3(<n+p=4)$. 
By using \eqref{eq:pertopt_prob1.9}, we get $\|\Delta_s\|=0.2097$ (rounded) for both these values of $s$. The corresponding perturbation (by \eqref{eq:pertopt_prob1.sbp1.1}) is found to be such that for the modified system, $\Delta(1,1)=-0.0270$ and $\Delta(1,3)=-0.2079$ (rounded), which has norm of $0.2097$ (rounded). The perturbed system $\Gamma^P$ has $\mathcal{V}(\Gamma^P)\neq\{0\}$.

A similar approach may be employed when $Q_2^\alpha=0$ for all but finite $s\in\mathbb{C}$.\hfill$\square$
\subsection{Computation Time}
On a 12th Gen Intel Core i9-12900K processor with 64 GB RAM, the time taken by Algorithm \ref{alg:IV.2.2} to compute the structured perturbation in Example \ref{ex:2} is 23 seconds (for $\Theta_0$ having discrete values of $\theta$ in steps of 0.01 and the algorithm stops when the number of elements in $\Theta_k$ is less than or equal to 1). This algorithm requires a single run on a single initial point, and hence is fast in finding the optimal solution. Further, it converges to the global minimum with high degree of confidence.
\section{Problem 2: Affine Perturbation}\label{sec:4.2}
We consider Problem \eqref{eq:pertopt_prob2.1}. A similar problem for the case of controllability was considered in \cite{RankRelaxation_arXiv_2022}. We modify the algorithm in \cite{RankRelaxation_arXiv_2022} to solve \eqref{eq:pertopt_prob2.1}.
The approach is as follows. Let the perturbations to $A$, $B$, $C$, $D$,  be $\Delta_A$, $\Delta_B$, $\Delta_C$, $\Delta_D$, respectively. Further, let $A_{\Delta}=A-\Delta_A$, $B_{\Delta}=B-\Delta_B$, $C_{\Delta}=C-\Delta_C$ and $D_{\Delta}=D-\Delta_D$, and let $s=\lambda+\mu j$.
By Lemma 1 of \cite{RankRelaxation_arXiv_2022}, the above problem is equivalent to
{\footnotesize\begin{equation}\label{eq:pertopt_prob2.4}
\begin{aligned}
&\arginf\limits_{x,\lambda\in\mathbb{R},\mu\in\mathbb{R},\Delta}&& \left\|\sum\limits_{i=1}^lE_i\Delta G_i \right \|\\\
    &\qquad\:s.t. && \underbrace{\begin{bmatrix}A_\Delta-\mu I & B_\Delta & -\lambda I & 0\\
     C_\Delta & D_\Delta & 0 & 0\\
     \lambda I & 0 & A_\Delta-\mu I & B_\Delta\\
     0 & 0 & C_\Delta & D_\Delta\end{bmatrix}}_{\tilde{\Lambda}_\Delta}x=0,
\end{aligned}
\end{equation}}

\noindent where $x\in\mathbb{R}^{(2(n+m))\times (2(n+p))}\backslash\{0\}$. 

The solution to Problem \eqref{eq:pertopt_prob2.4} is obtained by using Algorithm 2 of \cite{RankRelaxation_arXiv_2022}, which we summarize in Algorithm \ref{alg:IV.3}. We briefly discuss the algorithm below.

Let $Z=\tilde{\Lambda}_\Delta$. The constraint in Problem \eqref{eq:pertopt_prob2.4} (that is, $Z$ loses rank) is equivalent to having the last singular value of $Z$ equal to zero. The last singular value of $Z$ is obtained as $\|Z\|_*-\|Z\|_{F_{2(n+p)-1}}$, where $\|\cdot\|_{*}$ and $\|\cdot\|_{F_{r}}$ denote the nuclear norm (sum of all singular values) and the Ky Fan $r$-norm (sum of $r$ largest singular values), respectively. We penalize this constraint with a regularization parameter $\zeta>0$, and solve it iteratively by linearizing $\|Z\|_{F_{2(n+p)-1}}$ at $Z^{(k)}$. By linearizing, at the $(k+1)^{th}$ iteration we can replace $\|Z\|_{r_{2(n+p)-1}}$ by $\text{trace}(U_1^{(k)^T}ZV_1^{(k)})$, where $U_1^{(k)}$ and $V_1^{(k)}$ are matrices whose columns are the left and right singular vectors, respectively, of the corresponding largest $2(n+p)-1$ singular values of $Z^{(k)}$ and trace($\cdot$) is the sum of diagonal elements. Hence, Problem \eqref{eq:pertopt_prob2.4} is solved by solving the following convex problem at each iteration (Steps 6 and 7 of Algorithm \ref{alg:IV.3}):
{\footnotesize\begin{equation}\label{eq:pertopt_prob2.5}
\begin{aligned}
&\arginf\limits_{\lambda,\mu,\Delta,Z}&&\left\|\sum\limits_{i=1}^lE_i\Delta G_i\right\|+\zeta \left(\|Z\|_*-\text{trace}(U_1^{(k)^T}ZV_1^{(k)})\right)\\\
    &\qquad\:s.t. &&  Z=\tilde{\Lambda}_\Delta.
\end{aligned}
\end{equation}}

From \eqref{eq:pertopt_prob2.5}, we compute $Z^{(k+1)}$, and then update $U^{(k+1)}$ and $V^{(k+1)}$ using $Z^{(k+1)}$. The solution converges when  $F(Z^{(k)})=\left\|\sum\limits_{i=1}^lE_i\Delta^{(k)} G_i \right\|+\zeta \left(\|Z^{(k)}\|_*-\|Z^{(k)}\|_{F_{2(n+p)-1}}\right)$ converges, based on a specified threshold $\tau$ (Steps 4-6).

\begin{remark}
    To compute the global minimum, Algorithm \ref{alg:IV.3} takes more time to converge and requires multiple runs with different initial points. Hence, in solving the structured problem (that is, Problem \eqref{eq:pertopt_prob1.3}), Algorithm \ref{alg:IV.2.2} is preferred to Algorithm \ref{alg:IV.3}, since it converges to the global minimum much faster.\hfill$\square$
\end{remark}
\begin{algorithm}
\caption[Affine perturbation]{Affine perturbation}\label{alg:IV.3}
\textbf{Input:} Matrices $A$, $B$, $C$, $D$, $E_i$, $G_i\:\forall\: i\in\{1,2,\cdots,l\}$.\\
\textbf{Output:} $s,\|\sum\limits_{i=1}^lE_i\Delta^* G_i\|$ and $\Delta^*$ which solves \eqref{eq:pertopt_prob2.1}.\\~\\
\textbf{Step 1:} Let $k=0$. Let $\hat{\Delta}_s^{(0)}$, $\hat{\lambda}^{(0)}$ and $\hat{\mu}^{(0)}$ be chosen randomly.\\
\textbf{Step 2:} Compute $\hat{Z}$ using $\hat{\Delta}_s^{(0)}$, $\hat{\lambda}^{(0)}$ and $\hat{\mu}^{(0)}$. From singular value decomposition of $\hat{Z}$, find $\hat{U}_1^{(0)}$ and $\hat{V}_1^{(0)}$.\\
\textbf{Step 3:} Compute initial points $\Delta^{(0)},\lambda^{(0)},\mu^{(0)},Z^{(0)}$ by solving Problem \eqref{eq:pertopt_prob2.5} with its objective function changed to $\|Z\|_*-\text{trace}(\hat{U}_1^{(0)^T}Z\hat{V}_1^{(0)})$. Let $\tau>0$ be a threshold value.\\
\textbf{Step 4:} Let $F(Z^{(k)})=\|\sum\limits_{i=1}^lE_i\Delta^{(k)} G_i\|+\zeta(\|Z^{(k)}\|_*-\|Z^{(k)}\|_{F_{2(n+p)-1}})$.\\ 
\textbf{Step 5:} If $k=0$ or $|F(Z^{(k)})-F(Z^{(k-1)})|>\tau$ (for $k>0$), do Steps 6 to 8.\\
\textbf{Step 6:} Compute $U_1^{(k)}$ and $V_1^{(k)}$ from singular value decomposition of $Z^{(k)}$.\\
\textbf{Step 7:} Let $\zeta=\frac{\|\sum\limits_{i=1}^lE_i\Delta^{(0)} G_i\|}{\epsilon}$, for some $\epsilon\ll 1$. Compute $\Delta^{(k+1)},\lambda^{(k+1)},\mu^{(k+1)},Z^{(k+1)}$ by solving \eqref{eq:pertopt_prob2.5}.\\
\textbf{Step 8:} Update $k$ to $k+1$. Go to Step 4.\\
\textbf{Step 9:} Stop when convergence is reached.
\end{algorithm}
We can significantly improve the speed and accuracy of Algorithm \ref{alg:IV.3} by choosing the global solutions to the structured problem, Problem \eqref{eq:pertopt_prob1.3}, as the initial points of Algorithm \ref{alg:IV.3} (since the global minimum-norm perturbation and $s$ are generally close for the affine and structured cases).\par

\textbf{Example \ref{ex:2} (Contd.)} We consider the case where only $A_o(1,1)$, $A_o(1,3)$ and $A_o(3,1)$ are allowed to be perturbed in the system in Example 1 of \cite{Transmissionzerosradius_IFAC_2008}. 
As before, we consider the modified system \eqref{sys:ex2_modfd}. We use Algorithm \ref{alg:IV.3} to find the minimum perturbation. Earlier, we specified the global minimum perturbation for two cases considered in Example \ref{ex:2} in the previous section, that is, Case 1 where four elements $A(i,j)$, where $i,j\in\{1,3\}$, can be perturbed, and Case 2 where only two elements $A(1,1)$ and $A(1,3)$ can be perturbed. We choose these optimal values to initialize Algorithm \ref{alg:IV.3} in order to reach the global solution faster. Table \ref{tab:InitPts_Affine} contains these two initial points. Note that $\Delta_A(3,3)$ is set to $0$ for these initial points to follow the sparsity structure that only $A(1,1)$, $A(1,3)$ and $A(3,1)$ can be perturbed.
\begin{table}
\centering

\begin{tabular}{|c | c|} 
 \hline
 $\Delta_A$ (rounded) & $s$\\ [0.5ex] 
 \hline\hline
 $\begin{bmatrix}-0.0341 & 0 & -0.2048\\
  0 & 0 & 0\\
0.0682 & 0 & 0\end{bmatrix}$ & $0.8297+0.5583j$\\ 
 \hline
 $\begin{bmatrix}-0.0270 & 0 & -0.2079\\
  0 & 0 & 0\\
0 & 0 & 0\end{bmatrix}$ & $0.8108+0.5367j$\\ 
 \hline
\end{tabular}
\caption{Initial Points for Affine Perturbation}
\label{tab:InitPts_Affine}
\end{table}
With these initial points and  $\tau=10^{-8}$, $\epsilon=10^{-4}$. we get the optimal value of $\Delta_A$ (denoted as $\Delta_A^*$) at $s^*=0.8021 + 0.4948j$ for the modified system as (rounded to 4 decimal points):
{\small\begin{align*}
\Delta_A^*=\begin{bmatrix}-0.0096&-0.2079\\-0.1680&0\end{bmatrix},
\end{align*}}

\noindent and hence $\|E\Delta^*G\|=\|\Delta_A^*\|=0.2086$ (rounded). Note that this norm is very close to the global minimum in the structured case. Hence, this local minimum well approximates the global minimum for the affine case.\hfill$\square$

\subsection{Computation Time}
From \cite{RankRelaxation_arXiv_2022}, the time complexity of Algorithm \ref{alg:IV.3} is upper bounded by {$O((n+m)(n+p))$}. In our experiments on a 12th Gen Intel Core i9-12900K processor with 64 GB RAM, with $\tau=10^{-5},\epsilon=10^{-4}$ and the initial points in Table \ref{tab:InitPts_Affine}, the execution time is approximately 1 second but the solution has $\|\Delta^*\|=0.2111$, which is not accurate. To improve accuracy, we decreased the tolerance level to $\tau=10^{-8}$. With this, the algorithm converges to $\|\Delta^*\|=0.2086$, which is closer to the global solution. However, it takes approximately 45 minutes to converge. In general, we need multiple algorithm runs with different initial points to find a local minima that is close to the global minima. This increases the execution time. However, in our experiments, we needed only one run to achieve approximately global solution. This is because the initialization points in Table \ref{tab:InitPts_Affine} are global solutions of the structured case. These initial points guides the algorithm quickly to the solution.


\section{Conclusion}
In this paper, we proposed algorithms to convert a non-opaque system to an opaque one by minimally perturbing the system matrices. This enables the system operator to incorporate privacy (opacity) in the system. Future directions of work include assessing the minimum sparse perturbation to ensure other forms of opacity, such as current-state, $K$-step, etc., improving the algorithm on finding minimum structured perturbations, such that approximations are not required, developing algorithms that find the global minimum for affine perturbations with greater accuracy.
\section*{Appendix}
\subsection{Proof of $\|\Delta^r\|=\|E\Delta G\|$}\label{app:normeq}
We consider the two fundamental operations by which $\Delta^r$ is formed from $E\Delta G$ - removing all-zero rows and removing all-zero columns. In the following, we show that the norm of $E\Delta G$ does not change when either of these operations are performed, thus showing that $\|\Delta^r\|=\|E\Delta G\|$.\\
    \textbf{Removing zero row of $E\Delta G$:}\par
    We have by definition, $\|E\Delta G\|=\max\limits_{x\neq 0}\frac{\|E\Delta G x\|_2}{\|x\|_2}$.

    Since in this case, it holds that $\|E\Delta G x\|_2=\|\Delta^rx\|_2$, we have
    \begin{align*}
    \|E\Delta G\|=\max\limits_{x\neq 0}\frac{\|E\Delta G x\|_2}{\|x\|_2}=\max\limits_{x\neq 0}\frac{\|\Delta^rx\|_2}{\|x\|_2}=\|\Delta^r\|.
    \end{align*}
    
   \noindent \textbf{Removing zero column of $E\Delta G$:}\par
    We have by definition, $\|E\Delta G\|=\max\limits_{x\neq 0}\frac{\|E\Delta G x\|_2}{\|x\|_2}$.

    Let $y$ be the sub-vector of $x$ such that the elements of $x$ corresponding to the zero columns of $E\Delta G$ are not included. It is evident that $\|E\Delta G x\|_2=\|\Delta^ry\|_2$.\par 
    Therefore,
    \begin{align*}
    \max\limits_{x\neq 0}\frac{\|E\Delta G x\|_2}{\|x\|_2}=\max\limits_{x\neq 0}\frac{\|\Delta^ry\|_2}{\|x\|_2}.
    \end{align*}
    
    The maximum of the above expression is achieved when the elements of $x$ corresponding to the zero columns of $\Delta$ are set to 0. In this case, we have $\|x\|_2=\|y\|_2$, and hence,
    \begin{align*}
    \max\limits_{x\neq 0}\frac{\|\Delta^ry\|_2}{\|x\|_2}=\max\limits_{x\neq 0}\frac{\|\Delta^ry\|_2}{\|y\|_2},
    \end{align*}
    
    \noindent which is equal to
    \begin{align*}
    \max\limits_{y\neq 0}\frac{\|\Delta^ry\|_2}{\|y\|_2}=\|\Delta^r\|.
    \end{align*}
\subsection{Analysis of Approximation For Structured Perturbation}\label{app:approxanalysis}
We consider the approximation for structured perturbation given by \eqref{eq:pertopt_prob_set_1}. For this approximation, we take an example for which the approximation fails for some values of $\epsilon$ that approximates $E$ and $G$ to $\breve{E}$ and $\breve{G}$, respectively. With this example, we show that for all practical control systems, this approximation is valid.

Consider the system with matrices given by:
{\small\begin{align*}
A=\begin{bmatrix}10^4\end{bmatrix},B=\begin{bmatrix}0&0\end{bmatrix},C=\begin{bmatrix}10^{-4}\\0\end{bmatrix},D=\begin{bmatrix}10^4&10^4\\0&10^{-4}\end{bmatrix}.
\end{align*}}

For this system, $\Lambda_s$ is given by:
{\small\begin{align*}
    \Lambda_s=\begin{bmatrix}10^4-s&0&0\\
    10^{-4}&10^4&10^4\\
    0&0&10^{-4}\end{bmatrix},
\end{align*}}

\noindent for which we perturb only $\Lambda_s(1,2)$. This example serves as one of the significant corner cases for which the approximation might fail for some values of $\epsilon$. 
We first consider $s=0$ to that the minimum-perturbation norm found by the approximation is accurate. After this, we show that this approximation is accurate for any $s\in\mathbb{C}$ as well. With this, we consider the following perturbation:
{\small\begin{align*}
    \Lambda_s-E\Delta G&=\begin{bmatrix}10^4&0&0\\
    10^{-4}&10^4&10^4\\
    0&0&10^{-4}\end{bmatrix}-\\&\begin{bmatrix}1&0&0\\
    0&0&0\\
    0&0&0\end{bmatrix}\begin{bmatrix}\Delta(1,1)&\Delta(1,2)&\Delta(1,3)\\
    \Delta(2,1)&\Delta(2,2)&\Delta(2,3)\\
    \Delta(3,1)&\Delta(3,2)&\Delta(3,3)\end{bmatrix}\begin{bmatrix}0&0&0\\
    0&1&0\\
    0&0&0\end{bmatrix}\\
    &=\begin{bmatrix}10^4&-\Delta(1,2)&0\\
    10^{-4}&10^4&10^4\\
    0&0&10^{-4}\end{bmatrix}.
\end{align*}}

For this matrix, we use \eqref{eq:pertopt_prob1.9} and \eqref{eq:pertopt_prob1.sbp1.1} to get $\|\Delta_s^*\|=\|\Delta(1,2)\|=10^{12}$, for which
{\small\begin{align*}
\Delta_s^*=\begin{bmatrix}
    0&-10^{12}&0\\
    0&0&0\\
    0&0&0
\end{bmatrix}.\numberthis\label{eq:approx_Deltastar}
\end{align*}}

Due to this, we get $\Lambda_s-E\Delta_s^*G$ as
{\small\begin{align*}
    \Lambda_s-E\Delta_s^* G&=\begin{bmatrix}10^4&10^{12}&0\\
    10^{-4}&10^4&10^4\\
    0&0&10^{-4}\end{bmatrix}.
\end{align*}}

Since the first and second columns of $\Lambda_s-E\Delta_s^*G$ are dependent, we have rank$(\Lambda_s-E\Delta_s^*G)=2<n+p\:(=3)$.

Let us consider the approximation given by \eqref{eq:pertopt_prob_set_1}. We first show that if $\epsilon$ is not small enough, the approximation is inaccurate, that is, the solution to Problem \eqref{eq:pertopt_prob_set_1} is far from the solution to Problem \eqref{eq:pertopt_prob1.6}. After this, we show that for smaller values of $\epsilon$, the approximation is accurate, that is, the solutions to these two problems are very close.

We add $\epsilon$ to the diagonal elements of $E$ and $G$, such that $\breve{E}$ and $\breve{G}$ are:
{\small\begin{align*}
    \breve{E}=\begin{bmatrix}
    1&0&0\\
    0&\epsilon&0\\
    0&0&\epsilon
    \end{bmatrix},\breve{G}=\begin{bmatrix}
    \epsilon&0&0\\
    0&1&0\\
    0&0&\epsilon
    \end{bmatrix}.
\end{align*}}

Let the perturbation in this case be denoted by $\breve{\Delta}$. Therefore, we have:
{\small\begin{align*}\breve{E}\breve{\Delta}\breve{G}=
\begin{bmatrix}\epsilon\breve{\Delta}(1,1)&\breve{\Delta}(1,2)&\epsilon\breve{\Delta}(1,3)\\
    \epsilon^2\breve{\Delta}(2,1)&\epsilon\breve{\Delta}(2,2)&\epsilon^2\breve{\Delta}(2,3)\\
    \epsilon^2\breve{\Delta}(3,1)&\epsilon\breve{\Delta}(3,2)&\epsilon^2\breve{\Delta}(3,3)\end{bmatrix}.
\end{align*}}

In this example, if $\epsilon$ is not small enough, then $\breve{\Delta}(3,2)$ will be taken into consideration during the perturbation, such that the rank reduces. For instance, if we consider $\epsilon=10^{-4}$, we get:
{\small\begin{align*}\breve{E}\breve{\Delta}\breve{G}=
\begin{bmatrix}10^{-4}\breve{\Delta}(1,1)&\breve{\Delta}(1,2)&10^{-4}\breve{\Delta}(1,3)\\
    10^{-8}\breve{\Delta}(2,1)&10^{-4}\breve{\Delta}(2,2)&10^{-8}\breve{\Delta}(2,3)\\
    10^{-8}\breve{\Delta}(3,1)&10^{-4}\breve{\Delta}(3,2)&10^{-8}\breve{\Delta}(3,3)\end{bmatrix}.
\end{align*}}

Consequently, we have
{\small\begin{align*}
    &\Lambda_s-\breve{E}\breve{\Delta}\breve{G}=\\
    &\begin{bmatrix}10^4-10^{-4}\breve{\Delta}(1,1)&\!\!\!-\breve{\Delta}(1,2)&\!\!\!-10^{-4}\breve{\Delta}(1,3)\\
    10^{-4}-10^{-8}\breve{\Delta}(2,1)&\!\!\!10^4-10^{-4}\breve{\Delta}(2,2)&\!\!\!10^4-10^{-8}\breve{\Delta}(2,3)\\
    -10^{-8}\breve{\Delta}(3,1)&\!\!\!-10^{-4}\breve{\Delta}(3,2)&\!\!\!10^{-4}-10^{-8}\breve{\Delta}(3,3)\end{bmatrix}.
\end{align*}}

By using the approximation, we obtain 
{\small\begin{align*}
    \breve{\Delta}_s^*=\begin{bmatrix}0&0&0\\
    0&0&0\\
    0&-1&0
    \end{bmatrix},\numberthis\label{eq:solnapprox1}
\end{align*}}

\noindent such that 
{\small\begin{align*}
    \Lambda_s-\breve{E}\breve{\Delta}_s^*\breve{G}=\begin{bmatrix}10^4&\!\!\!0&\!\!\!0\\
    10^{-4}&\!\!\!10^4&\!\!\!10^4\\
    0&\!\!\!10^{-4}&\!\!\!10^{-4}\end{bmatrix}.
\end{align*}}

Since the second and third column vectors of $\Lambda_s-\breve{E}\breve{\Delta}_s^*\breve{G}$ are dependent, we have rank$(\Lambda_s-\breve{E}\breve{\Delta}_s^*\breve{G})=2<n+p\:(=3)$. Further, the norm of $\breve{\Delta}_s^*$ is given by $\|\breve{\Delta}_s^*\|=1$. This is much smaller than the norm of the perturbation computed without approximation, that is, $\Delta_s^*$ in \eqref{eq:approx_Deltastar}, whose norm is $\|\Delta_s^*\|=10^{12}$. Hence, for $\epsilon=10^{-4}$, we get an inaccurate minimum-norm perturbation. This is because the approximation with $\epsilon$ relaxes the structure enforced by $\Lambda_s-E\Delta G$ (for which $\epsilon=0$). Consequently, elements in $\Lambda_s$ that cannot be perturbed by $E\Delta G$ are allowed to be perturbed by the approximation (for which $\epsilon\neq 0$). This relaxation has a greater effect when $\epsilon$ has a higher value. Hence, we require $\epsilon$ to be lower such that this relaxation has negligible effect on the perturbation structure.

Let $\epsilon=10^{-20}$. For this $\epsilon$, the minimum-norm perturbation that solves \eqref{eq:pertopt_prob_set_1} is given by:
{\small\begin{align*}
    \breve{\Delta}_s^*=\begin{bmatrix}0&-10^{12}&0\\
    0&0&0\\
    0&0&0
    \end{bmatrix},
\end{align*}}

\noindent for which $\|\breve{\Delta}_s^*\|=10^{12}$, which is equal to the minimum perturbation norm $\|\Delta_s^*\|$ given by \eqref{eq:approx_Deltastar} (without approximation). Further $\breve{\Delta}_s^*$ reduces the rank of $\Lambda_s$ as:
{\small\begin{align*}
    &\Lambda_s-\breve{E}\breve{\Delta}_s^*\breve{G}=
    \\&\begin{bmatrix}10^4-10^{-20}\breve{\Delta}_s^*(1,1)&\!\!\!-\breve{\Delta}_s^*(1,2)&\!\!\!-10^{-20}\breve{\Delta}_s^*(1,3)\\
    10^{-4}-10^{-40}\breve{\Delta}_s^*(2,1)&\!\!\!10^4-10^{-20}\breve{\Delta}_s^*(2,2)&\!\!\!10^4-10^{-40}\breve{\Delta}_s^*(2,3)\\
    -10^{-40}\breve{\Delta}_s^*(3,1)&\!\!\!-10^{-20}\breve{\Delta}_s^*(3,2)&\!\!\!10^{-4}-10^{-40}\breve{\Delta}_s^*(3,3)\end{bmatrix}\\
    &=\begin{bmatrix}10^4&\!\!\!10^{12}&\!\!\!0\\
    10^{-4}&\!\!\!10^4&\!\!\!10^4\\
    0&\!\!\!0&\!\!\!10^{-4}\end{bmatrix},
\end{align*}}

\noindent has the first and second column vectors as dependent.

To compare with $\epsilon=10^4$, let us consider the perturbation $\breve{\Delta}$ with same structure as $\breve{\Delta}_s^*$ in \eqref{eq:solnapprox1}:
{\small\begin{align*}
\breve{\Delta}=\begin{bmatrix}0&0&0\\
    0&0&0\\
    0&-10^{16}&0
    \end{bmatrix}.
\end{align*}}

This $\breve{\Delta}$ reduces the rank of $\Lambda_s$ since
{\small\begin{align*}
\Lambda_s-\breve{E}\breve{\Delta}\breve{G}=\begin{bmatrix}10^4&\!\!\!0&\!\!\!0\\
    10^{-4}&\!\!\!10^4&\!\!\!10^4\\
    0&\!\!\!10^{-4}&\!\!\!10^{-4}\end{bmatrix},
\end{align*}}

\noindent such that the second and third column vectors of $(\Lambda_s-\breve{E}\breve{\Delta}\breve{G})$ are dependent. However, note that $\|\breve{\Delta}\|=10^{16}$, which is greater than $\|\Delta_s^*\|=10^{12}$. Thus, $\breve{\Delta}$ is not the minimum-norm perturbation.


Hence, by lowering $\epsilon$ to $\epsilon=10^{-20}$, the approximation gives perturbations closer to the optimal perturbation $\Delta_s^*$ in \eqref{eq:approx_Deltastar}. This is because a lower value of $\epsilon$ constrains $\breve{E}\breve{\Delta}\breve{G}$ to only perturb $\Lambda_s(1,2)$.

From the above, we see that for this system we require $\epsilon<10^{-16}$ for accuracy. The drawback with a very low $\epsilon$ is that it demands a higher numerical precision, thereby increasing the computation time. 
We tabulate the minimum perturbation norm obtained through approximation for different values of $\epsilon$ in Table \ref{tab:Deltas_epsilon}.

\begin{table}
\centering

\begin{tabular}{|c | c| c|} 
 \hline
 $\epsilon$ & $\|\Delta_s^*\|$ by approximation (round)\\ [0.5ex] 
 \hline\hline
 $10^{-4}$ & 1\\ 
 \hline
  $10^{-8}$& $10^{4}$\\ 
   
 \hline
 $10^{-12}$& $10^{8}$\\ 
 \hline
 $10^{-16}$& $7.0711\times 10^{11}$\\ 
 \hline
 $10^{-20}$& $10^{12}$\\ 
 \hline
 $10^{-50}$& $10^{12}$\\ 
 \hline
 
 $10^{-100}$& $10^{12}$\\ 
 \hline
 $10^{-200}$& $10^{12}$\\ 
 \hline
 $10^{-323}$& $10^{12}$\\ 
 \hline
 
\end{tabular}
\caption{$\|\Delta_s^*\|$ by approximation for $\delta_M=10^4$ and different values of $\epsilon$}
\label{tab:Deltas_epsilon}
\end{table}

For real-world control systems, the elements of the system matrices $A,B,C,D$ are bounded above and below by finite values. Therefore, we consider the approximation for different bounds on the element values. Let $\delta_M$ and $\frac{1}{\delta_M}$ denote the maximum and minimum values, respectively, of the elements of $\Delta$ and $\breve{\Delta}$.

We analyze $\Lambda_s$ for these bounds. Hence, we have for $s=0$,
{\small\begin{align*}
\Lambda_s-E\Delta G&=\begin{bmatrix}\delta_M&-\Delta(1,2)&0\\
    \delta_M^{-1}&\delta_M&\delta_M\\
    0&0&\delta_M^{-1}\end{bmatrix},\\
    \Lambda_s-\breve{E}\breve{\Delta} \breve{G}&=\\&\!\!\!\!\!\!\!\!\!\begin{bmatrix}\delta_M-\epsilon\breve{\Delta}(1,1)&-\breve{\Delta}(1,2)&-\epsilon\breve{\Delta}(1,3)\\
    \delta_M^{-1}-\epsilon^2\breve{\Delta}(2,1)&\delta_M-\epsilon\breve{\Delta}(2,2)&\delta_M-\epsilon^2\breve{\Delta}(2,3)\\
    -\epsilon^2\breve{\Delta}(3,1)&-\epsilon\breve{\Delta}(3,2)&\delta_M^{-1}-\epsilon^2\breve{\Delta}(3,3)\end{bmatrix}\!.
\end{align*}}

For this, we have that when $-\Delta(1,2)=\delta_M^3$ and all other elements of $\Delta$ are 0, then $\Lambda_s$ loses rank. Also, the minimum perturbation by the approximation such that $\Lambda_s$ loses rank is such that $-\epsilon\breve{\Delta}(3,2)=\delta_M^{-1}$ and all other elements of $\breve{\Delta}_s$ are 0. In this case, for $\|\Delta\|$ to be equal to $\|\breve{\Delta}\|$, then it is required that
{\small\begin{align*}
    &&|\Delta(1,2)|&=|\breve{\Delta}(3,2)|\\
    &\implies& \delta_M^3&=\frac{\delta_M^{-1}}{\epsilon}\\
    &\implies& \epsilon&=\delta_M^{-4}.
\end{align*}}

For each $\delta_M$, we can find a suitable value of $\epsilon$ from the above equation and from the data in Table \ref{tab:Deltas_epsilon}. From this table, we have that for $\frac{1}{\delta_{M}}\leq|\Delta(i,j)|\leq \delta_{M}$ (where $i\in\{1,2,\cdots,n+m\},j\in\{1,2,\cdots,n+p\}$), the value of $\epsilon=\frac{1}{10^4\delta_{M}^4}$ provides sufficient accuracy.

Next, we consider the maximum upper bound $\delta_M$ of the elements of $\Lambda_s$ such that the approximation $\|\breve{\Delta}_s^*\|$ is valid. This is done empirically for different values of $\delta_M$ and the value of $\epsilon=\frac{1}{10^4\delta_{M}^4}$. The corresponding values of $\|\Delta_s^*\|$ (by \eqref{eq:pertopt_prob1.9}) and $\|\breve{\Delta}_s^*\|$ (by \eqref{eq:approx_norm}) are tabulated in \footnote{The values of $\phi$ in Table \ref{tab:Deltas_deltaM} are due to the fact that for these values of $\delta_M$, $\theta$ and $\|\Delta_{s_{k-1}}^*\|$, there do not exist $|s|$ for which $\|\Delta_{|s|e^{j\theta}}^*\|=\|\Delta_{s_{k-1}}^*\|$.}{Table \ref{tab:Deltas_deltaM}}. It is evident from this data that the approximation is valid at least up to $\delta_M=10^{75}$. This value of $\delta_M$ corresponds to all practical control systems, allowing our approximation to hold valid in practice.
\begin{table}
\centering

\begin{tabular}{|c |c| c| c|c|c|} 
 \hline
 $\delta_M$&$\epsilon$ & $\|\Delta_s^*\|$(exact) & $\|\breve{\Delta}_s^*\|$ (round) & $\frac{|\text{Error}|}{\|\Delta_s^*\|}\%$\\ [0.5ex] 
 \hline\hline
 $10^{1}$& $10^{-8}$ & $10^3$ & $10^3$ & $5.0010\times 10^{-7}$\\ 
   \hline
 $10^{5}$ & $10^{-24}$ & $10^{15}$& $10^{15}$ & $5\times 10^{-7}$\\
 \hline
 $10^{10}$ & $10^{-44}$ & $10^{30}$& $10^{30}$ & $5\times 10^{-7}$\\ 

 \hline
 $10^{25}$ & $10^{-104}$ & $10^{75}$& $10^{75}$ & $5\times 10^{-7}$\\
 \hline
 $10^{50}$ & $10^{-204}$ & $10^{150}$& $10^{150}$ & $5\times 10^{-7}$\\
 \hline
 
 $10^{75}$ & $10^{-304}$ & $10^{225}$& $10^{225}$ & $5\times 10^{-7}$\\
 \hline
\end{tabular}
\caption{$\|\Delta_s^*\|$ by approximation for different values of $\delta_M$ and $\epsilon$}
\label{tab:Deltas_deltaM}
\end{table}

Next, we show that this approximation holds true for systems with large dimensional matrices. For this, we consider the system whose $\Lambda_s$ is a block-diagonal matrix of size $999\times 999$. Each block in the diagonal is the matrix in the previous example, with $\delta_M=10^{10}$ ($s=0$):
{\small\begin{align*}
    \begin{bmatrix}10^{10}&0&0\\
    10^{-10}&10^{10}&10^{10}\\
    0&0&10^{-10}\end{bmatrix}.
\end{align*}}

Here as well, we perturb only the element $\Lambda_s(1,2)$, and $\epsilon=10^{-42}$. From the computation, we have $\|\Delta_s^*\|=10^{30}$ (exact) and $\|\breve{\Delta}_s^*\|=10^{30}$ (round). For this case, we have $\frac{|\text{Error}|}{\|\Delta_s^*\|}\%=5\times 10^{-7}$. This shows the robustness of the approximation to large dimensional system matrices.

Next, we show that with this approximation, given $\|\Delta_{s_{k-1}}^*\|$ (for some $k>0$) and $\theta\in[0,2\pi]$, the procedure detailed in Section \ref{sec:s_forpert} outputs the exact values of the set of $|s|$ satisfying $\|\Delta_{|s|e^{j\theta}}^*\|=\|\Delta_{s_{k-1}}^*\|$. This also shows that the approximation is accurate for arbitrary $s\in\mathbb{C}$, and not just for $s=0$. The values of $|s|$ obtained for different values of $\delta_M$, $\theta$ and $\|\Delta_{s_{k-1}}^*\|$ are tabulated in Table \ref{tab:svals}. We have chosen $\epsilon=10^{-320}$. As shown by the zero values in the $|\text{Error}|$ column of this table, the computed values of $|s|$ give the exact perturbation norm $\|\Delta_{|s|e^{j\theta}}^*\|=\|\Delta_{s_{k-1}}^*\|$. This shows empirically that the computation of the set of $|s|$ is accurate for arbitrary $\theta\in[0,2\pi]$. We show this further by a Monte Carlo analysis below.

We perform Monte Carlo analysis over $10^4$ runs to demonstrate that the computed set of $|s|$ results in the exact perturbation norm $\|\Delta_{s_{k-1}}^*\|$. In each run, we choose a random value from the set $[0,2\pi]$ for $\theta$, and random values from the set $[0,10^3]$ for the elements of matrices $A\in\mathbb{R}^{3\times 3}$, $B\in\mathbb{R}^{3\times 2}$, $C\in\mathbb{R}^{2\times 3}$ and $D\in\mathbb{R}^{2\times 2}$. We choose $\epsilon=10^{-150}$ and $\|\Delta_{s_{k-1}}^*\|=10^{230}$. We constrain the perturbation $\Delta$ to only perturb $A(1,2)$. Hence, $E=\text{diag}(1,0,0,0,0)$ and $G=\text{diag}(0,1,0,0,0)$. The output of the analysis is as follows. For $100\%$ of the $10^4$ runs, this method computes $|s|$ for which $\|\Delta_{|s|e^{j\theta}}^*\|=\|\Delta_{s_{k-1}}^*\|=10^{230}$. This result demonstrates again the accuracy of the approximation.
\begin{table}
\centering

\begin{tabular}{|c |c| c| c|c|c|} 
 \hline
 \!$\delta_M$\!&\!$\theta$\! & \!$\|\Delta_{s_{k-1}}^*\|$\! & \!$|s|$ (largest, round) \!& \!$\|\Delta_{|s|e^{j\theta}}^*\|$\!& \!\!$|\text{Error}|$\!\!\\ [0.5ex] 
 \hline\hline
 $10$ & $0$ & $10^3$ & $20$ & $10^3$ & $0$ \\ \hline
$10$ & $0$ & $10^{230}$ & $10^{228}$ & $10^{230}$ & $0$ \\ \hline
$10$ & $\pi/4$ & $10^3$ & $7.4336 \times 10^{-19}$ & $10^3$ & $0$ \\ \hline
$10$ & $\pi/4$ & $10^{230}$ & $1.4142 \times 10^{225}$ & $10^{230}$ & $0$ \\ \hline
$10$ & $\pi/3$ & $10^3$ & $6.0747 \times 10^{-19}$ & $10^3$ & $0$ \\ \hline
$10$ & $\pi/3$ & $10^{230}$ & $1.1547 \times 10^{225}$ & $10^{230}$ & $0$ \\ \hline
$10$ & $\pi/2$ & $10^3$ & $5.2670 \times 10^{-19}$ & $10^3$ & $0$ \\ \hline
$10$ & $\pi/2$ & $10^{230}$ & $10^{225}$ & $10^{230}$ & $0$ \\ \hline
$10$ & $5\pi/6$ & $10^3$ & $1.0571 \times 10^{-18}$ & $10^3$ & $0$ \\ \hline
$10$ & $5\pi/6$ & $10^{230}$ & $2 \times 10^{225}$ & $10^{230}$ & $0$ \\ \hline
$10^{75}$ & $0$ & $10^3$ & $10^{75}$ & $10^3$ & $0$ \\ \hline
$10^{75}$ & $0$ & $10^{230}$ & $10^{80}$ & $10^{230}$ & $0$ \\ \hline
$10^{75}$ & $\pi/4$ & $10^3$ & $\phi$ & $-$ & $-$ \\ \hline
$10^{75}$ & $\pi/4$ & $10^{230}$ & $1.4142\times 10^{77}$ & $10^{230}$ & $0$ \\ \hline
$10^{75}$ & $\pi/3$ & $10^3$ & $\phi$ & $-$ & $-$ \\ \hline
$10^{75}$ & $\pi/3$ & $10^{230}$ & $1.1547\times 10^{77}$ & $10^{230}$ & $0$ \\ \hline
$10^{75}$ & $\pi/2$ & $10^3$ & $\phi$ & $-$ & $-$ \\ \hline
$10^{75}$ & $\pi/2$ & $10^{230}$ & $10^{77}$ & $10^{230}$ & $0$ \\ \hline
$10^{75}$ & $5\pi/6$ & $10^3$ & $\phi$ & $-$ & $-$ \\ \hline
$10^{75}$ & $5\pi/6$ & $10^{230}$ & $2\times 10^{77}$ & $10^{230}$ & $0$ \\ \hline
\end{tabular}
\caption{Values of $|s|$ for different values of $\Delta_{s_{k-1}}^*$ and $\theta$}
\label{tab:svals}
\end{table}


\end{document}